\newcommand{\set}[1]{\left\{#1\right\}}                     
\newcommand{\abs}[1]{\left|#1\right|}                       
\newcommand{\bra}[1]{\left(#1\right)}
\newtheorem{theorem}{Theorem}[section]
\newtheorem{lemma}[theorem]{Lemma}
\newtheorem{proposition}[theorem]{Proposition}
\begin{document}
\pagestyle{plain}
\def\spacingset#1{\renewcommand{\baselinestretch}%
{#1}\small\normalsize} \spacingset{1}

\newcommand{\blind}{0}

\newcommand{\tit}{\Large Building Representative Matched Samples with Multi-valued Treatments in Large Observational Studies\thanks{For comments and suggestions, we thank Paul Rosenbaum. We acknowledge support from a grant from the Alfred P. Sloan Foundation.}}

\if0\blind
{
  \title{\tit}
  \author{Magdalena Bennett\thanks{Department of Education Policy and Social Analysis, Teachers College at Columbia University; email: \url{mb3863@tc.columbia.edu}.} \and Juan Pablo Vielma\thanks{Operations Research and Statistics Group, Sloan School of Management, Massachusetts Institute of Technology; email: \url{jvielma@mit.edu}.} \and Jos\'{e} R. Zubizarreta\thanks{Department of Health Care Policy, and Department of Statistics, Harvard University; email: \url{zubizarreta@hcp.med.harvard.edu}.}
    }
    
    \date{} 
    
  \maketitle
} \fi

\if1\blind
{
  \bigskip
  \bigskip
  \bigskip
  \begin{center}
    {\Large Building Representative Matched Samples with Multi-valued Treatments in Large Observational Studies}
\end{center}
  \medskip
} \fi

\bigskip
\begin{abstract}

In this paper, we present a new way of matching in observational studies that overcomes three limitations of existing matching approaches.  First, it directly balances covariates with multi-valued treatments without requiring the generalized propensity score.  Second, it builds self-weighted matched samples that are representative of a target population by design.  Third, it can handle large data sets, with hundreds of thousands of observations, in a couple of minutes.  The key insights of this new approach to matching are balancing the treatment groups relative to a target population and positing a linear-sized mixed integer formulation of the matching problem.  We formally show that this formulation is more effective than alternative quadratic-sized formulations, as its reduction in size does not affect its strength from the standpoint of its linear programming relaxation. We also show that this formulation can be used for matching with distributional covariate balance in polynomial time under certain assumptions on the covariates and that it can handle large data sets in practice even when the assumptions are not satisfied.  This algorithmic characterization is key to handle large data sets.  We illustrate this new approach to matching in both a simulation study and an observational study of the impact of an earthquake on educational attainment. After matching, the results can be visualized with simple and transparent graphical displays: while increasing levels of exposure to the earthquake have a negative impact on school attendance, there is no effect on college admission test scores.

\end{abstract}
\noindent%
{\it Keywords:}  Causal Inference; Multi-valued Treatment; Observational Studies; Optimal Matching; Propensity Score; Representative Study.
\vfill

\newpage
\spacingset{1.5} 
\clearpage

\section{Introduction}
\label{sec:introduction}

\subsection{Practical appeal of matching}

In observational studies, matching is a general method for covariate adjustment that approximates the ideal experiment that would be conducted if controlled experimentation was possible.
Under the assumption of strong ignorability \citep{rosenbaum1983central}, or no unmeasured confounders \citep{imbens2004nonparametric}, matching methods are often used to estimate treatment effects in observational studies with binary treatments (e.g., \citealt{dehejia1999causal, haviland2007combining}), longitudinal \citep{lu2005propensity, zubizarreta2014isolation} and multilevel data \citep{li2013propensity, zubizarreta2017optimal}, but also under different assumptions, for example with instrumental variables \citep{baiocchi2010building, zubizarreta2013stronger} or in discontinuity designs \citep{keele2015enhancing, mattei2016regression}.

The practical appeal of matching methods lies in part in the conceptual simplicity and transparency of their adjustments \citep{cochran1973controlling}.
These adjustments are an interpolation instead of an extrapolation based on a model that can be misspecified \citep{rosenbaum1987model}.
Matching also enables the integration of quantitative and qualitative analyses \citep{rosenbaum2001matching}.
Furthermore, since matching methods do not routinely use the outcomes for their adjustments, it is often argued that they promote the objectivity of the study by separating the design and analysis of an observational study into two distinct stages \citep{rubin2008objective}.
Finally, matching methods can facilitate simpler forms of statistical inference and sensitivity analyses to hidden biases (see, for instance, Chapter 3 of \citealt{rosenbaum2010design1}).
See \cite{stuart2010matching}, \cite{imbens2015matching}, and \cite{rosenbaum2017observation} for overviews of matching methods.

\subsection{Three challenges}

Most of the work in matching has been for binary treatments and, although there are methods for more general treatments 
(e.g., \citealt{lu2001matching}, \citeyear{lu2011optimal}; \citealt{yang2016propensity,lopez2017estimation}), 
some challenges remain.
One of these challenges relates to the difficulty of balancing covariates by means of the estimated propensity score \citep{rosenbaum1983central} or generalizations thereof (e.g., \citealt{joffe1999propensity, imbens2000role, imai2004causal}).
The propensity score is the conditional probability of treatment assignment given observed covariates.
It has the important property that matching on the propensity score tends to balance the covariates used to estimate the score; however, in any given data set it might be difficult to balance the covariates even if the propensity score model is correctly specified \citep{yang2012optimal}.
This challenge has been discussed, e.g., by \cite{hill2011bayesian} and \cite{zubizarreta2011matching}, and alternative methods that directly balance the covariates have been proposed (e.g., \citealt{diamond2013genetic, zubizarreta2012using}), yet these are for binary treatments.
The difficulties of balancing covariates by matching on generalizations of the propensity score can easily be exacerbated with multi-valued treatments.

The second challenge with matching methods, both for binary and multi-valued treatments, is targeting parameters of general scientific and policy interest, especially when there is limited overlap in covariate distributions across treatment groups.
When there is sufficient overlap, it is possible to estimate the average effect of one treatment in place of another treatment (possibly the control treatment) on one of these two treatment groups, but it is difficult to estimate other parameters such as the average effect on all treatment groups (in resemblance to the standard average treatment effect; ATE), or on a particular group as defined by covariates (the conditional average treatment effect; CATE), without matching with replacement or reweighting which can make inference more complicated, as discussed by \cite{abadie2008failure}.
When there is limited overlap, it is not possible to target the average effect of one treatment in place of another treatment for any of these treatment groups without imposing strong parametric assumptions, and it is common in practice to settle for more local average effects which may have more internal validity but be less generalizable.
Often times, these analyses are criticized for their limited scientific and policy interest (see, e.g., \citealt{imbens2010better}).

A third challenge relates to matching in large data sets.
Most common optimization-based matching methods rely on quadratic-sized formulations that cannot handle data sets with hundreds of thousands of observations quickly.
As we describe below, the problem is that such formulations are too big to be practical or that they require additional structure on the covariates in order to run quickly.

Through a case study of the impact of an earthquake on educational achievement, our goal in this paper is to overcome these three challenges and present a new matching method that: (i) handles multi-valued treatments and directly balances covariates without estimating a generalization of the propensity score; (ii) finds self-weighting matched samples that are not only balanced but also have a similar structure to the one of a target population (thereby allowing the investigator to target parameters of general policy and scientific interest); and (iii) runs quickly in large data sets, for example, with hundreds of thousands of observations in a couple of minutes.
For this, we use the following ideas.

\subsection{Main ideas}\label{mainideas}

Optimization-based methods for matching with binary treatments can be divided, roughly, into network flow methods (e.g., \citealt{rosenbaum1989optimal,hansen2004full,pimentel2015large}) and integer or mixed integer programming (MIP) methods (e.g., \citealt{nikolaev2013balance,sauppe2014complexity,zubizarreta2012using,zubizarreta2014matching}).
For the most, network flow methods minimize an aggregate measure of covariate distances, and not covariate balance directly, although there is the clever extension by \cite{pimentel2015large} for nested nominal covariates.
The advantage of MIP methods is that they target covariate balance (and distances) more directly and flexibly.
However, MIP based methods require the solution of a theoretically intractable or NP-hard problem, whereas network flow matching methods only require the solution of a polynomially solvable problem that is tractable both in theory and practice.
Still, thanks to state-of-the-art MIP solvers which nearly double their speeds every year \citep{bixby2012brief,Achterberg2013}, MIP methods are mostly tractable in practice, and hence the computational advantage of network flow methods has been steadily decreasing.

If we consider matching with three or more treatments or exposures, this computational advantage vanishes as network flow methods for minimum distance problems also become theoretically intractable \citep{michael1979computers}.
In this case, direct adaptations of both network flow and MIP-based methods are also intractable in practice.
To circumvent this computational limitation, we match each treatment group to a representative sample of a target population and develop a new MIP-based matching formulation.
A key to develop this method is a linear programming (LP) view of the difference in computational complexity between minimum distance matching and MIP-based methods.
This view models the central problem of both methods as a MIP problem and then analyses the strength of their LP relaxation. For minimum distance matching the LP relaxation has the strongest possible \emph{integral} property, which implies the associated MIP problem can be solved as an LP problem in polynomial time \citep{schrijver2003combinatorial}. For MIP based techniques the LP relaxation fails to be integral, but having an LP relaxation that is ``close'' to integral is known to result on small solve times \citep{vielma2015mixed}. Unfortunately, we often face a trade-off between strong, but large formulations and smaller, but weaker formulations.
With this in mind, we show that a linear-sized MIP formulation for covariate balance is as strong as an alternative, and more common, quadratic-sized formulation. We also show that this formulation is in fact integral when only two covariates are considered or when the nested  covariate structure of \cite{pimentel2015large} is present.
While the formulation is not integral when more than three covariates are used, the associated MIP nonetheless remains strong and can be solved in minutes in large data sets.
In contrast, attempting to solve the problem with existing formulations (both network and quadratic-sized MIP formulations) results in an out-of-memory error even in a workstation with 32GB of RAM. 
In this age of big data, this algorithmic characterization is key to handle large data sets in a practical manner.  
 

In addition to allowing a computationally practical approach for matching with multi-valued treatments, the use of this auxiliary reference sample also increases the versatility of the method. By selecting this reference or template sample in different ways, we can target different estimands beyond the common average treatment effect on the treated (ATET), such as the (general) average treatment effect (ATE) or a conditional average treatment effect (CATE) for a particular group of the population (e.g., subjects with a particular age in a particular ethnic group).
In this manner we can build representative matched samples with multi-valued treatments that may be numerous and unordered.
In our procedure, covariate distances between matching play a secondary role (relegated to the rematching of the balanced units, in the spirit of \citealt{zubizarreta2014matching}).
In fact, our procedure does not require explicitly modeling the propensity score and directly balances the covariates by design; this is, as specified before matching by the investigator.
Furthermore, it facilitates checking covariate balance, as one can tabulate and plot the distributions of the observed covariates as opposed to evaluating balance in the context of a model.

\subsection{Case study and outline}

We illustrate all these ideas in the context of an observational study of the impact of a natural disaster on educational achievement; in particular, of the impact of the 2010 Chilean earthquake on test scores of university admission exams.
This is a very important question because due to its location in the Pacific Ring of Fire, Chile is considered one of the most seismically active countries in the world \citep{disaster2017}, concentrating 78 earthquakes above 7.0 magnitude since 1900 \citep{uchileearthquake} and holding the record for the largest earthquake ever registered \citep{usgsearthquakes}.
Furthermore, despite its sustained economic growth and being one of the most robust economies in Latin America, Chile has the highest Gini coefficient among the 35 OECD countries, followed by Mexico and the United States, which makes it the country with the most unequal income distribution of the OECD countries, according to this indicator \citep{oecd2016}.
In this context of extreme income inequality it is important to understand the impact of these recurring natural disasters on standardized university admissions examination scores as they almost fully determine students' entrance to university, one of the main avenues of social mobility and opportunity in Chile \citep{torche2005}.

To address this question, the rest of this paper is organized as follows.
In Section \ref{sec:impact}, we describe the Chilean educational system, the 2010 Chilean earthquake, and a particularly rich data set that is a census of the same students before and after the earthquake.
In Section \ref{sec:representative} we explain how to use the ideas in \cite{silber2014template} for building representative matched samples of target populations with possibly unordered and many treatment or exposure groups.
In Section \ref{sec:effective}, we formally analyze and contrast the properties of linear- and quadratic-sized MIP formulations for distributional balance, and evaluate their performance in data sets of increasing size.
In Section \ref{sec:results} we present estimates of the impact of levels of exposures to the earthquake both on school attendance and standardized test scores for university admission.
In Section \ref{sec:summary} we conclude with a summary and remarks.


\section{Impact of a natural disaster on educational opportunity}
\label{sec:impact}


\subsection{On the Chilean educational system}


In Chile, most students complete the required 12 years of school education \citep{casen2011}.
In their last year of high school, students take a college admission test called \emph{Prueba de Selecci\'on Universitaria} or PSU in order to apply to college.
The PSU is a high-stakes test as it comprises nearly 80\% of the final college admission score and it can only be taken once a year, at the end of the Chilean academic year.
Most students register for the test in their last year of high school, and the vast majority of them takes it only once in their lives.
The importance of the PSU does not only relate to being admitted into a given college, but also to obtaining financial aid and having the possibility to actually attend to college \citep{dinkelman2014}.
In Chile, higher education is one of the main avenues for social mobility and the improvement later life outcomes \citep{torche2005}.
Studies have shown that even though the returns to higher education in Chile are heterogeneous, there are still high, positive returns to attending college, specially highly-selective ones \citep{hastings2013, urzua2015}.

\subsection{The 2010 Chilean earthquake}

In February 27th, 2010, an earthquake of magnitude 8.8 struck near Concepci\'on, Chile's second largest city, going down in history as the  6$^\mathrm{th}$ most severe earthquake registered since 1900 so far \citep{usgs2014largest}.
The disaster severely damaged over 500,000 homes and affected more than 2 million people.
Losses amounted to 18\% of the gross domestic product \citep{mineduc2013recon}.
Nearly 20\% of the schools in the affected regions suffered moderate damages or worse, and many schools had to be completely reconstructed.
In addition, over 40\% of the students in these regions could not start their academic year on time.
24 million US dollars were redirected to rebuild and restore the affected schools and the academic calendar was adjusted \citep{mineduc2013recon}.

\subsection{A longitudinal census of students}

In our analyses, we use a new and rich administrative data set of the same high school students measured before and after the earthquake.
This data set is a longitudinal census as it comprises all Chilean students in 10th grade in 2008 (before the earthquake) and it collects their information again in 12th grade in 2010 (after the earthquake), when they are finishing high school and applying for college.
In 2008, the data provides detailed measures of the students, their schools, and respective households.
For instance, the data provides the standardized test scores from the Education Quality Measurement System (SIMCE), school attendance, GPA ranking within the school, and school characteristics such as its socioeconomic status.
In addition, it offers extensive information about socioeconomic characteristics of the students and their households, such as their parents education and the household income.

In 2010, the data contains two outcomes of interest: (i) the students' school attendance that academic year, and (ii) the language and mathematics PSU scores (both measured after the earthquake).
The first outcome is assesed as percentage.
The scale of the PSU is the same for the two tests and it ranges between 150 and 850 points, with a mean of 500 points and a standard deviation of 110 points \citep{demre2018}.
The data comprises 121279 students measured before and after the earthquake.

\subsection{Earthquake intensity levels}

We use peak ground acceleration (PGA) to measure the strength of the earthquake.
In contrast to other seismic intensity scales, for example, the Mercalli or the Richter scales, PGA is a purely physical measure of the shaking of the earthquake at a given location.
Following \cite{zubizarreta2013effect}, we used the PGA values provided by the United States Geological Survey \citep{usgs2011shakemap} to estimate the PGA in each of the counties where we have data.
Using these values, we created three measures of exposure to the earthquake: one with three levels of shaking, another with five levels, and a final one with ten, basically defined by quantiles of exposure (see Appendix A in the Supplementary Materials for details).


\section{Representative matching with multi-valued treatments}
\label{sec:representative}

As noted in Section~\ref{mainideas}, minimum distance matching (arguably, the mainstream form of optimal matching) with two treatments is computationally tractable (i.e. polynomially solvable), but it is computationally intractable (NP-hard) for three or more treatments.
To circumvent this limitation, instead of explicitly matching samples across treatments, we will separately match each treatment sample to a representative or template sample of a target population of special interest, extending the ideas of \cite{silber2014template}.
This will allow us to address the two aforementioned difficulties of handling multi-valued treatments and building representative samples in matching.
In the following section, we incorporate the idea into a MIP-based matching approach for large data sets.
The basic idea is depicted in Figure \ref{figure1}.

\begin{figure}
\centering
\caption{Template matching for multi-valued treatments.}
\subfloat[Traditional matching]{
\centering
\label{figure1a}
\makebox{
\begin{adjustbox}{minipage=\linewidth,scale=0.58}
\includegraphics[width=0.9\textwidth]{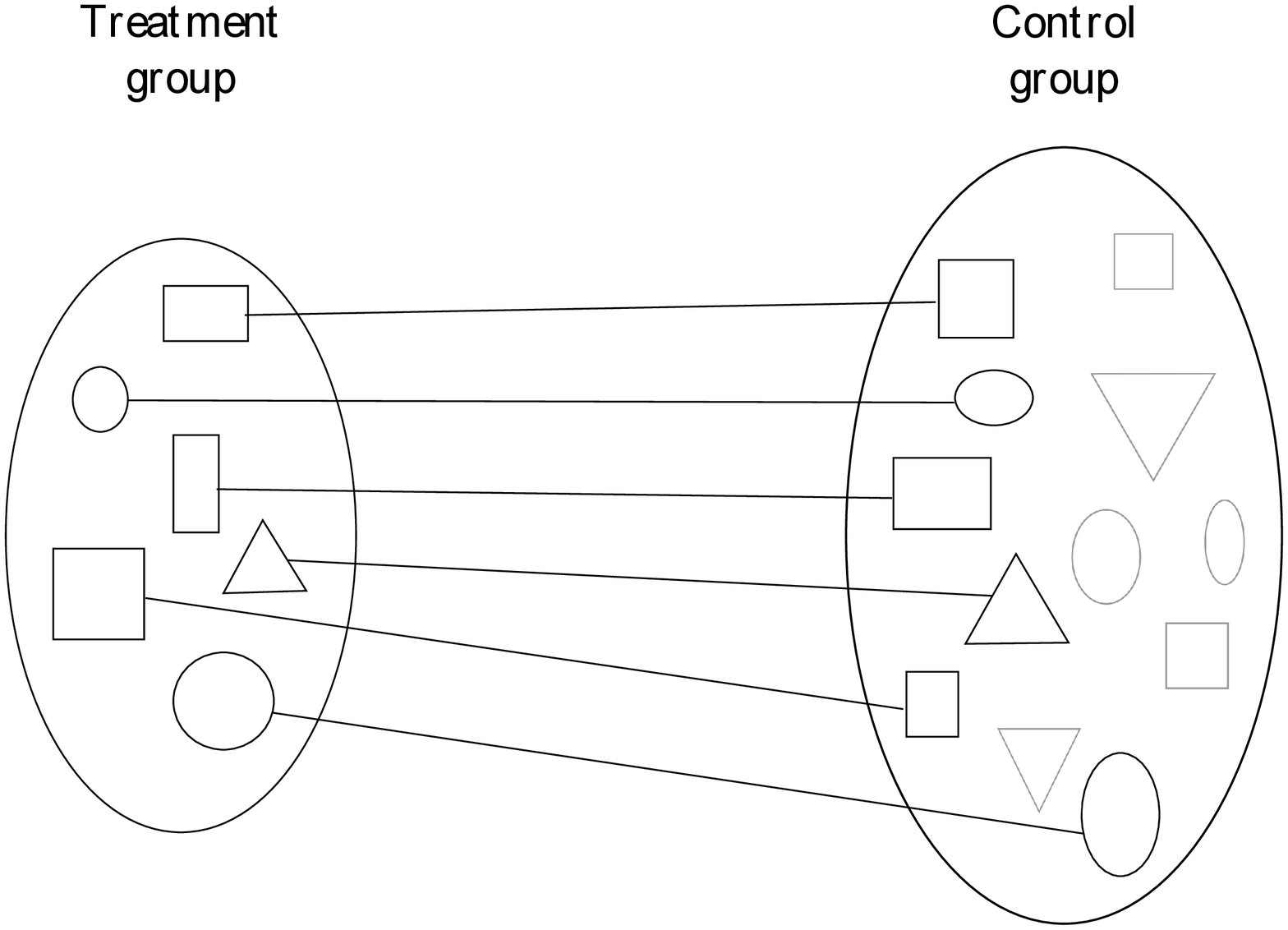}
\end{adjustbox}
}
}\\
\subfloat[Template matching for treatment and control groups]{
\centering
\vspace{.2cm}
\makebox{
\begin{adjustbox}{minipage=\linewidth,scale=0.58}
\includegraphics[width=0.9\textwidth]{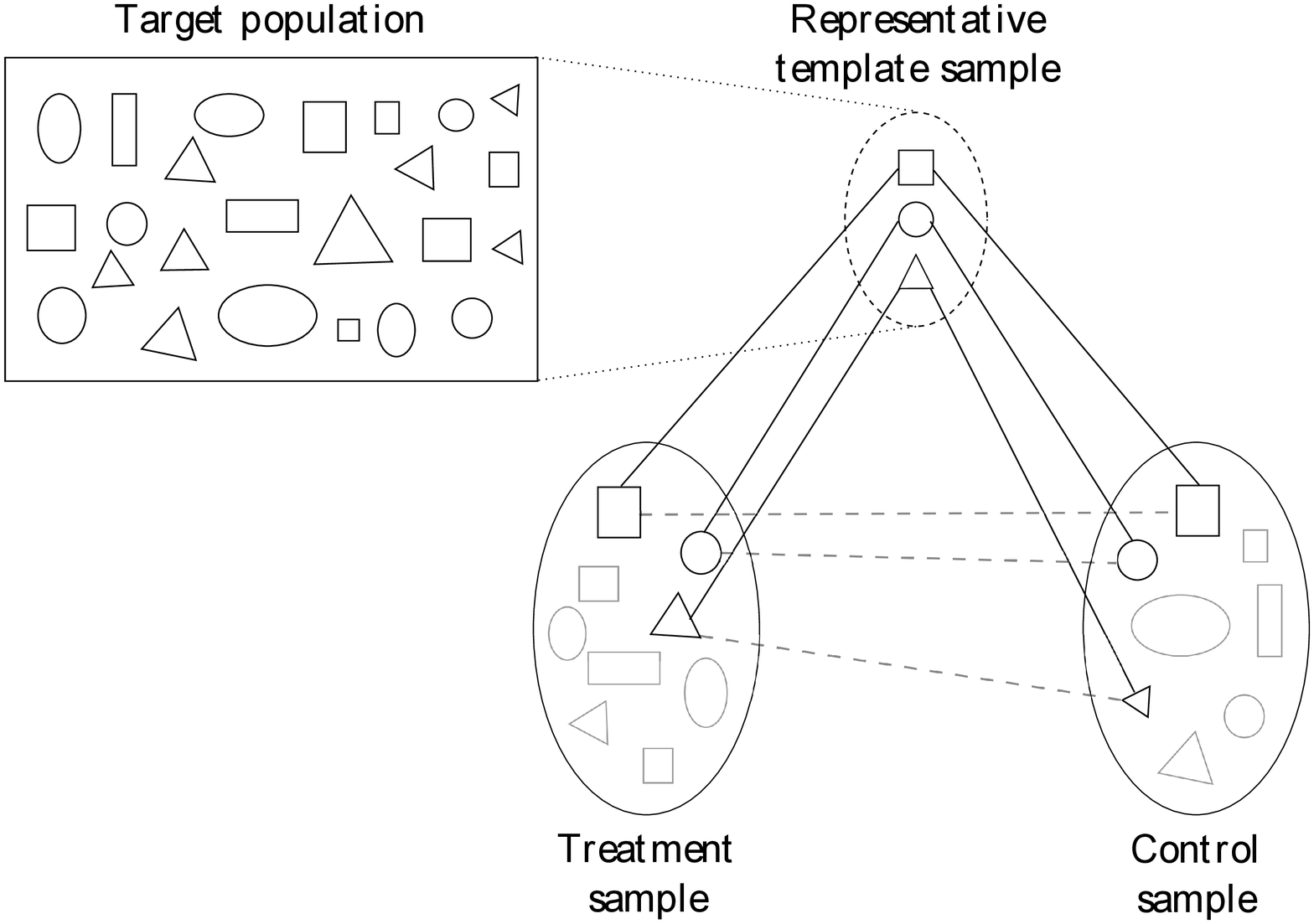}
\end{adjustbox}
}
\label{figure1b}
}\\
\subfloat[Template matching for multi-valued (ordered or unordered) treatments, e.g. levels of exposure]{
\centering
\vspace{.4cm}
\makebox{
\begin{adjustbox}{minipage=\linewidth,scale=0.58}
\includegraphics[width=0.9\textwidth]{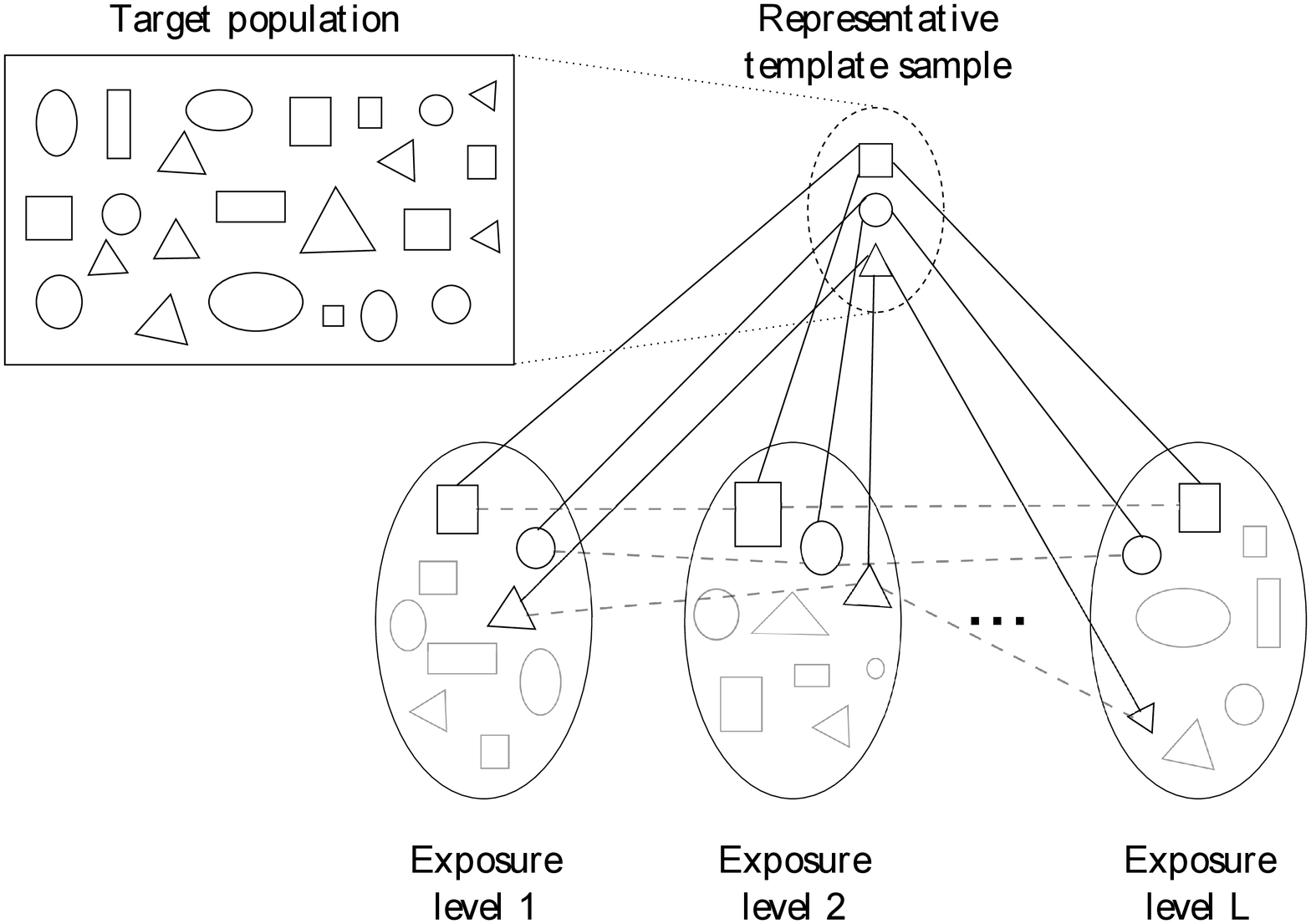}
\end{adjustbox}
}
\label{figure1c}
}
\label{figure1}
\end{figure}

Figure \ref{figure1}(a) shows traditional bipartite matching with a smaller treatment group than a control group.
The goal is to match all the treatment and control units with similar observed covariates, represented in the figure by the shapes of the elements.
Here, the target of inference is the average treatment effect on the treated (ATET).
But how to target a different estimand, one for a particular population of policy or scientific interest?
Call this estimand the target average treatment effect (TATE; \citealt{kern2016assessing}).
Figure \ref{figure1}(b) illustrates how to extend template matching for this purpose.
Here, a representative template sample is selected from the target population of interest and then matched to the treatment and control groups separately.
By construction, the matched groups will be balanced relative to the template sample and therefore to each other.
We can repeat this process with multi-valued (ordered or un-ordered) treatments, possibly tens or hundreds treatments --- as many as the data allows us to balance.
This is illustrated in Figure \ref{figure1}(c).

We select the template sample by drawing random samples of a given size from the population of interest.
In our case study, the population of interest was the population of all high school students in tenth grade in 2008 in Chile, and we drew 500 random samples of size 1000 from this population.
We chose this sample size so that it was large enough to represent all levels of exposure to the earthquake.
In this paper, we fixed this sample size for simplicity of exposition, but the matching formulations in Section \ref{sec:effective} can be extended to relax this requirement and its implied fixed matching ratio by finding the largest sample for each treatment that is balanced relative to certain moments of the empirical distribution of the covariates that characterizes the target population.
From the 500 random samples, we selected the sample that was closest to the population in terms of a robust version of the Mahalanobis distance (\citealt{rosenbaum2010design1}, Chapter 8).
Table \ref{template_comparison} in the Supplementary Materials describes the population and the selected template sample in terms of the means for each category of the observed covariates.
In our case study, we used this template to match representative samples for each level of exposure to the earthquake as follows.

As described in Table \ref{template_comparison}, we have 14 categorical covariates with 78 categories in total, or 78 binary covariates for each of these categories.
These covariates are: the student's gender, and ethnicity (3 categories); the father and mother's education (5 categories each), household income (7 categories), and number of books at home (6 categories); student's school attendance (10 categories), GPA ranking within the school (10 categories), and SIMCE score (10 categories); the school's type (public, voucher, or private), location (rural or urban), whether the school is catholic or not, its socioeconomic status (SES; 5 categories), and the average SIMCE score of the school (10 categories).
We perfectly balanced the marginal distributions of all these covariates by matching with fine balance \citep{rosenbaum2007minimum}.
Fine balance perfectly balances the marginal distributions of the observed covariates, but without constraining units to be matched within the same categories of the covariates (see \citealt{visconti2018handling} for an illustration).
By matching with fine balance to the template, we can guarantee that the matching across multiple values of the exposure will be perfectly balanced relative to each other and to template of the population in terms of the observed covariates (see Table \ref{tab:balance} below).

In our case study, identification of the TATE relies on strong ignorability \citep{rosenbaum1983central, imbens2015causal} of both the treatment assignment and the sample selection mechanisms (see, e.g., \citealt{tipton2013improving} and \citealt{kern2016assessing}).
The fact that the entire territory of the country is prone to earthquakes that people cannot anticipate, the broad scope of the covariates in our data set for which we adjust by matching, and the act of randomly selecting the template sample from the (entire) population, make these assumptions plausible.
We also assume the stable unit treatment value assumption (SUTVA; \citealt{rubin1980randomization,rubin1986comment}) holds.


\section{Effective formulations for matching using integer programming}
\label{sec:effective}

In this section, we analyze and contrast the effectiveness linear- and quadratic-sized mixed integer programming matching formulations.
In particular, we show that a particular linear-sized formulation is more effective as it permits us to handle considerably larger data sets in a shorter time.
For example, in our case study, one of the data sets involved more than 70,000 units, yet the linear-sized formulation could find the solution in less than one minute in a standard laptop computer.
With quadratic-sized formulations, the problem did not even fit in memory.
Coupled with the matching strategy in the above section, this allows us to build balanced and representative matched samples with multi-valued treatments in considerably larger data sets.

Parts of this section are technical and so we begin by describing its structure and main results.
In Section \ref{sec:new} we describe the notation and a quadratic-sized formulation for distributional balance that works well in some instances, but that cannot handle the data in our case study due to its size.
In Section \ref{sec:reducing} we describe how the quadratic-sized formulation can be reduced to a linear-sized formulation through a simple procedure that could potentially result in a loss of formulation strength.
In Section \ref{sec:complexity} we show that no strength is actually lost with this reduction and prove three results: the linear programming (LP) relaxation (i.e., the ``strength'') of the linear-sized formulation is equal to the one of the quadratic-sized formulation; with two covariates or nested covariates, the LP relaxation of both formulations is integral; and with three or more covariates, the LP relaxation of both formulations can fail to be integral.
In practice, however, the smaller formulation is quite practical.
As we show in Section \ref{sec:performance}, this formulation can handle data sets with more than 700,000 units in only a couple of minutes.

\subsection{A large formulation for distributional balance}
\label{sec:new}

Let $\mathcal{T}=\set{t_1,\ldots,t_T}$ represent the $T$ units in the template sample, $\mathcal{L}=\set{\ell_1,\ldots,\ell_{L}}$ be the $L$ units in the sample under a given treatment or exposure level, and
$\mathbb{L}$ be the family containing such sets $\mathcal{L}$ of units for each treatment or exposure level (which we do not explicitly index to emphasize that exposure levels do not need to be ordered).
In the following, we fix $\mathcal{L} \in \mathbb{L}$ because, as explained in the previous section (Figure \ref{figure1}(c)), the matching process is identical for every exposure level $\mathcal{L} \in \mathbb{L}$.
In addition, let  $\mathcal{P}=\set{p_1,\ldots,p_P}$ denote the $P$ observed covariates that we aim to balance and $\mathbf{x}_i=\bra{x_{i,p}}_{p\in \mathcal{P}}$ define the covariate values for the units in the template and exposure level samples for $i \in \mathcal{T}$ and $i \in \mathcal{L}$, respectively.
Finally,  let $\mathcal{K}(p)=\set{k_1,\ldots,k_{K_p}}$ stand for the categories of covariate $p\in \mathcal{P}$, specify the template (level $l$) units in category $k$ in covariate $p$ as $ \mathcal{T}_{p,k}=\{t\in \mathcal{T}\,:\, x_{t, p}=k\}$ ($ \mathcal{L}_{p,k}=\{\ell\in \mathcal{L}\,:\, x_{\ell, p}=k\}$) and let $N_{p,k}=\abs{\mathcal{T}_{p,k}}$.

We can minimize the imbalances in the marginal distributions of the $P$ covariates, through the following mixed integer programming problem
\newlength{\mylengthb}
  \settowidth{\mylengthb}{$\displaystyle\abs{\sum\nolimits_{\ell \in \mathcal{L}_{p,k}} \sum\nolimits_{t\in \mathcal{T}}    m_{t,\ell} - N_{p,k}}$}
  \begin{subequations}\label{original}
\begin{alignat}{5}
               &\underset{\boldsymbol{v},\;\boldsymbol{m}}{\operatorname{minimize}}\quad &\makebox[\mylengthb][l]{$\displaystyle\sum\nolimits_{p\in \mathcal{P}} \sum\nolimits_{k\in \mathcal{K}(p)}v_{p,k}$}&&&\\
      &\text{subject to}\quad & \abs{\sum\nolimits_{\ell \in \mathcal{L}_{p,k}} \sum\nolimits_{t\in \mathcal{T}}  m_{t,\ell} - N_{p,k}}&\leq v_{p,k},  &\quad&\forall p\in \mathcal{P},\quad k\in\mathcal{K}(p),	\label{orilpaa}\\
  &&       \sum_{t\in \mathcal{T}}  m_{t,\ell} &\leq 1,  &\quad& \forall \ell\in\mathcal{L}, \label{orilpstart} \\
    &&      \sum_{\ell\in \mathcal{L}}  m_{t,\ell} &= 1,  &\quad& \forall t \in \mathcal{T},\label{orilpstart2}\\
        &&    m_{t,\ell} &\in \{0 , 1\}, &\quad& \forall t\in \mathcal{T},\quad \ell \in \mathcal{L},\label{oriint}
 \end{alignat}
\end{subequations}
where $v_{p,k}$ defines the imbalance or violations from fine balance \citep{rosenbaum2007minimum} for category $k$ of covariate $p$, and $m_{t,\ell}$ is the binary decision variable which takes the value of 1 if template unit $t$ is matched to level $\mathcal{L}$ unit $\ell$, and $0$ otherwise.

Formulation \eqref{original} has $T\times L + \sum_{p\in \mathcal{P}} K_p$ variables and $T + L+ \sum_{p\in \mathcal{P}} K_p$ constraints excluding the variable bounds.
While this problem size is polynomial on the number of units, the quadratic term $T\times L$ can be prohibitive in practice.
For instance, in our case study one of the data sets results in $1.8\times 10^6$ decision variables.
This makes the problem very hard in practice and attempting to solve it with a state-of-the-art solver, Gurobi v7.5.1 on a Core i7-3770 3.40GHz workstation with 32GB of RAM, results in an out-of-memory error.
We get the same error even if we attempt to solve the much easier LP relaxation obtained by relaxing constraint \eqref{oriint} into $m_{t,\ell} \in [0 , 1]$.

\subsection{Reducing the size of the formulation}
\label{sec:reducing}

One approach to reduce the size of formulation \eqref{original} is to define variables $z_\ell=\sum\nolimits_{t \in \mathcal{T}}m_{t,\ell}$ for each $\ell \in \mathcal{L}$ and use these new variables to eliminate the $m_{t,\ell}$ variables from formulation (\ref{original}).
For this, we first sum constraints \eqref{orilpstart2} to obtain
\begin{subequations}\label{originalsimplified}
\begin{alignat}{5}
             &\underset{\boldsymbol{v},\;\boldsymbol{m}}{\operatorname{minimize}}\quad &\makebox[\mylengthb][l]{$\displaystyle\sum\nolimits_{p\in \mathcal{P}} \sum\nolimits_{k\in \mathcal{K}(p)}v_{p,k}$}&&&\\
    &\text{subject to}\quad & \abs{\sum\nolimits_{\ell \in \mathcal{L}_{p,k}} \sum\nolimits_{t\in \mathcal{T}}  m_{t,\ell} - N_{p,k}}&\leq v_{p,k},  &\quad&\forall p\in \mathcal{P},\quad k\in\mathcal{K}(p)\\
&&       \sum_{t\in \mathcal{T}}  m_{t,\ell} &\leq 1,  &\quad& \forall \ell\in\mathcal{L} \label{leqoneoriginalsimplified}\\
  &&    \sum_{t\in \mathcal{T}} \sum_{\ell\in \mathcal{L}}    m_{t,\ell} &= T,  &\quad& \label{sumconstoriginalsimplified}\\
      &&    m_{t,\ell} &\in \{ 0, 1 \}, &\quad& \forall t\in \mathcal{T},\quad \ell \in \mathcal{L}.
\end{alignat}
\end{subequations}
We claim that formulation \eqref{original} is equivalent to \eqref{originalsimplified}.
For that, first note that any solution to \eqref{original} is a solution to \eqref{originalsimplified}.
Second, suppose that a solution $( \boldsymbol{v}^*,\;\boldsymbol{m}^*)$ to \eqref{originalsimplified} has some $t'\in \mathcal{T}$ such that $\sum_{\ell\in \mathcal{L}}  m^*_{t',\ell} \geq 2$.
Then, because of \eqref{sumconstoriginalsimplified}, there is some $t''\in \mathcal{T}$ with $t''\neq t'$ such that $\sum_{\ell\in \mathcal{L}}  m^*_{t'',\ell} = 0$.
If $\ell' \in \mathcal{L}$ is such that $m^*_{t',\ell'}=1$ we can change the solution by letting $m^*_{t',\ell'}=0$ and $m^*_{t'',\ell'}=1$. This does not change $\sum\nolimits_{t \in \mathcal{T}}m^*_{t,\ell}$ for any $t\in \mathcal{T}$ so the solution remains feasible for \eqref{originalsimplified}. If we repeat this step, we eventually get a solution to \eqref{original}.

Noting that the order of the sums in \eqref{sumconstoriginalsimplified} can be exchanged,   we can replace every occurrence of $\sum\nolimits_{t \in \mathcal{T}}m_{t,\ell}$ by $z_\ell$ in \eqref{originalsimplified} to obtain the following equivalent formulation
 \begin{subequations}\label{projected}
 \begin{alignat}{3}
  &\underset{\boldsymbol{v},\;\boldsymbol{z}}{\operatorname{minimize}} & \quad  \sum\nolimits_{p\in \mathcal{P}} \sum\nolimits_{k\in \mathcal{K}(p)}v_{p,k}  \\
       &\text{subject to}\quad  & \abs{ \sum\nolimits_{\ell \in \mathcal{L}_{p,k}}   z_\ell - N_{p,k}}   &\leq v_{p,k}  &\quad&\forall p\in \mathcal{P},\quad k\in\mathcal{K}(p)\label{prjlpaa}\\
     &      & \sum\nolimits_{\ell \in \mathcal{L}}   z_\ell    &=T  &\quad&\forall p\in \mathcal{P},\quad k\in\mathcal{K}(p)\label{sumtoTconstraint}\\
       &      &  z_\ell &\in \set{0,1} &\quad& \ell \in \mathcal{L},\label{prjlpend}
  \end{alignat}
 \end{subequations}
where $z_\ell$ is a binary decision variable that takes the value 1 if unit $\ell$ of the exposure level group $\mathcal{L}$ is matched to the template $\mathcal{T}$, and $0$ otherwise.
Formulation \eqref{projected} has $ L + \sum_{p\in \mathcal{P}} K_p$ variables and $1+ \sum_{p\in \mathcal{P}} K_p$ constraints excluding the variable bounds, which is significantly smaller than formulation \eqref{original}.
Indeed, for the aforementioned data set in our case study, going from formulation \eqref{original} to \eqref{projected} yields a reduction in the number of variables from over $1.8\times 10^6$ to $19208$ and a reduction in the number of constraints from $19370$ to $162$.
Thanks to this drastic decrease in size we can solve the LP relaxation of \eqref{projected} for this data set in less than one second.

However, this reduction in problem size and solve time for the LP relaxation may not necessarily translate into a reduction in solve time for the MIP formulation \eqref{projected}.
To analyze this solve time, it is convenient to consider the tractability of minimum distance matching (e.g., \citealt{rosenbaum1989optimal}) from an LP perspective instead of the traditional graph theoretical one (i.e., the assignment problem or bipartite matching one).
From this perspective, the tractability of minimum distance matching can be attributed to the fact that the LP relaxation of its standard assignment problem formulation has extreme points or basic feasible solutions that always satisfy the integrality constraints of the formulation.
A formulation with this property is usually denoted \emph{integral} and this property implies that solving the formulation is equivalent to solving its LP relaxation and hence minimum distance matching is polynomially solvable \citep{schrijver2003combinatorial}.
In contrast, the extreme points of the  LP relaxations of \eqref{original} and \eqref{projected}  do not necessarily satisfy the integrality constraints.
Hence, these problems are not necessarily polynomially solvable and may need to be solved with a MIP solver instead of just an LP solver.
In such cases, a relatively good predictor of the difficulty of solving the optimization problem with a MIP solver is the distance between the optimal value of the original MIP problem and its LP relaxation, which is often denoted the \emph{integrality} or \emph{LP GAP} \citep{vielma2015mixed}.
If the GAP is small, the formulation is \emph{strong} and it is expected to lead to faster solve times than similar sized \emph{weaker} formulations with a larger GAPs.
Unfortunately, the elimination of variables through  $z_\ell=\sum\nolimits_{t \in \mathcal{T}}m_{t,\ell}$ to obtain \eqref{projected} from \eqref{original} does not automatically guaranteed to preserve formulation strength, and hence the improvements from size reduction could be negated by a loss of strength.

\subsection{Complexity and strength of the formulation}
\label{sec:complexity}

Using results from \cite{balas1983perfectly} we can fortunately show that no formulation strength is lost in the transformation from \eqref{original} to \eqref{projected}. We formalize this fact in the following simple proposition for which we give a self-contained proof.
\\

\begin{proposition}\label{projectionprop}
  The LP relaxations of formulations \eqref{original} and \eqref{projected} are equivalent.
\end{proposition}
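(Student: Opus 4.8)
The plan is to read the claimed equivalence as a statement about projections: under the substitution $z_\ell=\sum_{t\in\mathcal{T}}m_{t,\ell}$, the orthogonal projection onto the $\bra{\boldsymbol{v},\boldsymbol{z}}$-space of the feasible region of the LP relaxation of \eqref{original} coincides exactly with the feasible region of the LP relaxation of \eqref{projected}. Because both objectives equal $\sum_{p\in\mathcal{P}}\sum_{k\in\mathcal{K}(p)}v_{p,k}$ and depend only on the shared variables $\boldsymbol{v}$, this set equality at once gives equal optimal values, which is the sense in which the reduction preserves strength. I would prove the set equality by two inclusions.

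For the easy inclusion I would take any $\bra{\boldsymbol{v},\boldsymbol{m}}$ feasible for the relaxation of \eqref{original}, so that $m_{t,\ell}\geq 0$, constraints \eqref{orilpstart} and \eqref{orilpstart2} hold, and the balance inequalities \eqref{orilpaa} hold, and then set $z_\ell=\sum_{t\in\mathcal{T}}m_{t,\ell}$. Each balance constraint \eqref{prjlpaa} is immediate since $\sum_{\ell\in\mathcal{L}_{p,k}}z_\ell=\sum_{\ell\in\mathcal{L}_{p,k}}\sum_{t}m_{t,\ell}$; the normalization \eqref{sumtoTconstraint} follows by exchanging the order of summation, $\sum_\ell z_\ell=\sum_t\sum_\ell m_{t,\ell}=T$; and $0\leq z_\ell\leq 1$ follows from $m_{t,\ell}\geq 0$ together with \eqref{orilpstart}. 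Hence the projected point is feasible for the relaxation of \eqref{projected}.

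The substance lies in the reverse inclusion: given $\bra{\boldsymbol{v},\boldsymbol{z}}$ feasible for the relaxation of \eqref{projected}, I must produce a lifting $\boldsymbol{m}\geq 0$ with $\sum_\ell m_{t,\ell}=1$ for every $t$ and $\sum_t m_{t,\ell}=z_\ell$ for every $\ell$, so that \eqref{orilpstart} is recovered from $\sum_t m_{t,\ell}=z_\ell\leq 1$ and the balance terms in \eqref{orilpaa} match those in \eqref{prjlpaa} with the same $\boldsymbol{v}$. This is exactly a balanced transportation feasibility question with unit supplies at each template $t$, demands $z_\ell$ at each level unit $\ell$, and equal totals $\sum_t 1=T=\sum_\ell z_\ell$ guaranteed by \eqref{sumtoTconstraint}. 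Such a nonnegative system is always feasible; this is the point at which the projection results of \cite{balas1983perfectly} can be invoked to certify the lift cleanly, though one can equally construct $\boldsymbol{m}$ by an explicit greedy (northwest-corner) allocation.

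I expect this reverse inclusion to be the main obstacle, since it is precisely the step that could fail had the variable elimination lost strength: it asserts that every fractional $\boldsymbol{z}$ satisfying the aggregated constraints disaggregates into a fractional matching with the same imbalance vector $\boldsymbol{v}$. Once the transportation plan is in hand, the verification that the resulting $\boldsymbol{m}$ satisfies every constraint of \eqref{original} and reproduces the same $\boldsymbol{v}$ is a routine substitution.
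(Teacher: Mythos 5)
Your proposal is correct and follows essentially the same route as the paper: the easy direction is the identical substitution $z_\ell=\sum_{t\in\mathcal{T}}m_{t,\ell}$, and the paper's proof of the converse is precisely the explicit greedy (northwest-corner) allocation you describe, written out with running indices $\underline{s}_j,\overline{s}_j$ to split the $z_\ell$ masses across consecutive template units so that each row sums to $1$ and each column sums to $z_\ell$. Your observation that this is a balanced transportation feasibility problem (supplies $1$, demands $z_\ell\leq 1$, equal totals by the constraint $\sum_\ell z_\ell = T$), with the Balas-style projection result as an alternative certificate, matches the paper's own framing and citation.
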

\begin{proof}
We follow a similar logic to our arguments for the equivalence between \eqref{original} and \eqref{originalsimplified}, but using continuous instead of binary variables.

First, let $\bra{\boldsymbol{v},\;\boldsymbol{m}}$ be a feasible solution to the LP relaxation of \eqref{original}. Then $\bra{\boldsymbol{v},\;\boldsymbol{z}}$ for $\boldsymbol{z}$ given by $z_\ell=\sum_{t\in \mathcal{T}}  m_{t,\ell}$ for all  $\ell \in \mathcal{L}$ is a feasible solution to the LP relaxation of \eqref{projected}  obtained by relaxing \eqref{prjlpend} into $ z_\ell \in [0,1]$.

For the converse, let $\bra{\boldsymbol{v},\;\boldsymbol{z}}$ be a feasible solution to the LP relaxation of \eqref{projected}. We obtain a feasible solution $\bra{\boldsymbol{v},\;\boldsymbol{m}}$  to the LP relaxation of \eqref{original} by constructing $\boldsymbol{m}$ as follows. Let $\underline{s}_1=0$,  $\overline{s}_1=\min\set{s\geq 1\,:\, \sum_{i=\underline{s}_1+1}^{s} z_{\ell_i} \geq 1}$, $m_{t_1,\ell_{\overline{s}_1}}=1-\sum_{i=\underline{s}_1+1}^{\overline{s}_1-1} z_{\ell_i}$ and for each $i\in \set{\underline{s}_1+1,\ldots,\overline{s}_1-1}$ let $m_{t_1,\ell_i}=z_{\ell_i}$. Then, for each $j\in\set{2,\ldots,T}$ let $\underline{s}_j=\overline{s}_{j-1}$, $\overline{s}_j=\min\set{s\geq s_{j-1}\,:\, \bra{z_{\ell_{\underline{s}_{j}}}-m_{t_{j-1},\ell_{\underline{s}_{j}}}}+\sum_{i=\underline{s}_j+1}^{s} z_{\ell_{i}} \geq 1}$,
 $m_{t_j,\ell_{\overline{s}_j}}=1-\sum_{i=\underline{s}_j+1}^{\overline{s}_j-1} z_{\ell_i}- \bra{z_{\ell_{\underline{s}_{j}}}-m_{t_{j-1},\ell_{\underline{s}_{j}}}}$, $m_{t_j,\ell_{\overline{s}_j}}=\bra{z_{\ell_{\underline{s}_{j}}}-m_{t_{j-1},\ell_{\underline{s}_{j}}}}$, and for each $i\in \set{\underline{s}_j+1,\ldots,\overline{s}_j-1}$ let $m_{t_j,\ell_i}=z_{\ell_i}$. Because of \eqref{sumtoTconstraint} we have that $m_{t_T,\ell_{\overline{s}_T}}=z_{\ell_{\overline{s}_T}}$ and $z_{\ell_{i}}=0$ for all $i>\overline{s}_T$. Then, by construction $m$ satisfies \eqref{orilpstart}--\eqref{orilpstart2}, $m_{t,\ell} \in [0 , 1]$ for all $t\in \mathcal{T}$ and  $\ell \in \mathcal{L}$,  and $z_\ell=\sum_{t\in \mathcal{T}}  m_{t,\ell}$ for all  $\ell \in \mathcal{L}$. Finally, because of this last equation and the fact that $(z,v)$ satisfies \eqref{prjlpaa} we have that $\bra{\boldsymbol{v},\;\boldsymbol{z}}$ satisfies \eqref{orilpaa}.
\end{proof}
Proposition~\ref{projectionprop} shows that we do not loose formulation strength with the size reduction, but it does not tell us how strong the equivalent formulations are. The following proposition, that we prove in Appendix C in the Supplementary Materials, shows that for problems with at most two covariates or with nested covariates the LP relaxations of both formulations are integral and that, similar to the assignment formulations for propensity score matching, their solution is equivalent to the solution of their LP relaxations.

\begin{proposition}\label{allcovariatesprop}
The  LP relaxations of \eqref{original} and \eqref{projected} are integral (i.e. they have integral extreme points) if
\begin{enumerate}
  \item\label{allcovariatespropcase1} $P\leq 2$, or
  \item\label{allcovariatespropcase2} for  all $i<j$ and $k\in \mathcal{K}\bra{p_j}$ there exist $k'\in \mathcal{K}\bra{p_i}$ such that $\mathcal{L}_{p_j,k}\subseteq \mathcal{L}_{p_i,k'}$.
\end{enumerate}
In particular, under these conditions \eqref{original} and \eqref{projected} can be solved in polynomial time by solving their LP relaxations.
\end{proposition}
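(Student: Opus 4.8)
The plan is to work throughout with the compact formulation \eqref{projected}, since by Proposition~\ref{projectionprop} its LP relaxation is the projection onto the $(\boldsymbol v,\boldsymbol z)$ variables of the LP relaxation of \eqref{original} via the integral correspondence $z_\ell=\sum_{t\in\mathcal T}m_{t,\ell}$. Because the projection of an integral polytope under an integer linear map is again integral, and because each fibre over an integral $\boldsymbol z$ is a bipartite transportation polytope (hence integral), it suffices to establish integrality for \eqref{projected} and the property transfers to \eqref{original}; I would flag this transfer as routine and spend no further effort on it. Write $A$ for the category-membership matrix with rows indexed by the pairs $(p,k)$ and columns by $\ell\in\mathcal L$, where $A_{(p,k),\ell}=1$ exactly when $\ell\in\mathcal L_{p,k}$, so that \eqref{prjlpaa} reads $\lvert (A\boldsymbol z)_{p,k}-N_{p,k}\rvert\le v_{p,k}$. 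My strategy is to show that under either hypothesis $A$ is totally unimodular and then invoke the Hoffman--Kruskal theorem \citep{schrijver2003combinatorial} to conclude that all vertices are integral.

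The first step disposes of the auxiliary variables $\boldsymbol v$ and the absolute values. At any vertex $(\boldsymbol v^*,\boldsymbol z^*)$ of the relaxation, for each $(p,k)$ at least one of the two inequalities obtained by linearising \eqref{prjlpaa} must be tight, so $v^*_{p,k}=\lvert (A\boldsymbol z^*)_{p,k}-N_{p,k}\rvert$ is slaved to $\boldsymbol z^*$; since the $N_{p,k}$ are integers, a vertex is integral as soon as its $\boldsymbol z^*$ component is. This reduces the claim to showing that the polyhedron cut out in $\boldsymbol z$-space by the tight balance equalities $(A\boldsymbol z)_{p,k}=N_{p,k}$, the count equality \eqref{sumtoTconstraint} $\mathbf 1^\top\boldsymbol z=T$, and the box $0\le\boldsymbol z\le 1$ has integral vertices. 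I expect this reduction to be the main obstacle: the full linearised constraint matrix is \emph{not} totally unimodular --- a single category already contributes a $2\times2$ minor of determinant $-2$ --- and what rescues integrality is precisely that the two inequalities for a given $(p,k)$ share the right-hand side $N_{p,k}$, so that the genuine combinatorial content lives in $A$ rather than in the coupled system.

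It then remains to prove that $A$, augmented by the count row $\mathbf 1^\top$ and by the box, is totally unimodular in each case. For case~\ref{allcovariatespropcase1}, with $P\le 2$, every column of $A$ carries exactly one $1$ among the rows of $p_1$ and one among the rows of $p_2$, so $A$ is the node--edge incidence matrix of a bipartite (multi)graph whose vertex classes are the categories of $p_1$ and $p_2$ and whose edges are the units; such incidence matrices are totally unimodular. The count row equals the sum of the rows of either covariate block (each unit has a unique category in $p_1$), so it can be read as flow conservation at a super-source, leaving a network matrix intact; this is the polyhedral picture behind \citet{balas1983perfectly}, whose description of the bipartite matchable-set polytope I would use to make the $\boldsymbol m$-to-$\boldsymbol z$ passage clean. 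For case~\ref{allcovariatespropcase2}, the refinement hypothesis makes $\{\mathcal L_{p,k}\}_{p,k}$ a laminar family --- any two sets are nested or disjoint, with $\mathcal L$ as the top set --- and ordering the units by a depth-first traversal of the associated tree renders every $\mathcal L_{p,k}$, and the universe $\mathcal L$ recorded by \eqref{sumtoTconstraint}, a contiguous block, so $A$ enjoys the consecutive-ones property and is an interval matrix, hence totally unimodular.

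Finally, appending the box constraints $0\le\boldsymbol z\le 1$ as identity rows preserves total unimodularity, and the data $N_{p,k}$ and $T$ are integral, so Hoffman--Kruskal yields that every vertex of the $\boldsymbol z$-polyhedron, and hence by the first step every vertex of the relaxations of \eqref{projected} and \eqref{original}, is integral. In particular the LP optimum is attained at an integral point and coincides with the optimum of the mixed integer program; since the relaxation is solvable in polynomial time, this gives the asserted polynomial-time solvability. The two places I would watch most carefully are the reduction that localises all integrality content in $A$ and the check that adjoining the count row and the box does not spoil total unimodularity in either structural case.
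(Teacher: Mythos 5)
Your proof of integrality for \eqref{projected} is correct in substance but follows a genuinely different route from the paper's. The paper proves a hands-on perturbation result (Lemma~\ref{twocovlemma}): any fractional feasible point of the relaxation of \eqref{projected} is the midpoint of two distinct feasible points, built from alternating sequences of units that switch between categories of the two covariates, and the nested case is handled by applying that lemma to the last two covariates and lifting the perturbation through the nesting. You instead (i) eliminate $\boldsymbol{v}$ at a vertex --- correctly: $v_{p,k}$ appears only in the two linearizations of \eqref{prjlpaa}, so at least one is tight at any extreme point, forcing $v^*_{p,k}=\abs{\sum_{\ell\in\mathcal{L}_{p,k}}z^*_\ell-N_{p,k}}$, and $\boldsymbol{z}^*$ is then an extreme point of $\set{\boldsymbol{z}\in[0,1]^L:\ A_S\boldsymbol{z}=N_S,\ \mathbf{1}^\top\boldsymbol{z}=T}$, where $S$ indexes the doubly tight balance rows --- and (ii) prove total unimodularity of $[A;\mathbf{1}^\top]$, which passes to the row submatrix $[A_S;\mathbf{1}^\top]$ and yields integrality by Hoffman--Kruskal. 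Both TU claims are true, and the laminar/consecutive-ones argument for the nested case is exactly right. The one loose joint is the $P\le 2$ case: appending to a TU matrix a row equal to the sum of a block of its rows does \emph{not} preserve TU in general, so the ``super-source'' remark is not a proof. It is easily repaired by Ghouila--Houri: for a row subset without $\mathbf{1}^\top$, split by covariate; for one containing $\mathbf{1}^\top$, put $\mathbf{1}^\top$ alone in one class and the remaining selected rows in the other, so each column contributes at most two ones (one per covariate) minus exactly one. Your route is more modular than the paper's, and it also explains Lemma~\ref{examplelemma}: for $P=3$ the matrix $[A;\mathbf{1}^\top]$ is of three-dimensional-assignment type, where TU genuinely fails.

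The genuine gap is precisely the step you flagged as routine: transferring integrality from \eqref{projected} back to \eqref{original}. The principle you invoke (projection integral, fibres over integral points integral, hence the original polytope integral) is false for general polytopes: a vertex of the original polytope can project to a \emph{non}-vertex of the projection, as for a pyramid over the unit square with apex above its center. Your fibre argument only covers vertices of \eqref{original}'s relaxation that project onto vertices of \eqref{projected}'s relaxation. And here the loophole is real. Take $P=2$, $\mathcal{K}(p_1)=\mathcal{K}(p_2)=\set{k_1,k_2}$, template units $t_1=(k_1,k_1)$, $t_2=(k_2,k_2)$ (so $N_{p,k}=1$ for all $p,k$), and units $\ell_1=(k_1,k_1)$, $\ell_2=(k_1,k_2)$, $\ell_3=(k_2,k_1)$, $\ell_4=(k_2,k_2)$. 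The point $\boldsymbol{v}\equiv 0$, $m_{t_1,\ell_1}=m_{t_1,\ell_4}=m_{t_2,\ell_2}=m_{t_2,\ell_3}=1/2$, all other $m_{t,\ell}=0$, is feasible for the LP relaxation of \eqref{original}; it is also extreme, since any convex decomposition forces both summands to have $\boldsymbol{v}=0$ (hence all four balance constraints tight) and zero entries where $m^*$ vanishes, after which the two row-sum equations plus the balance equations for $(p_1,k_1)$ and $(p_2,k_1)$ pin every remaining entry at $1/2$. So the relaxation of \eqref{original} has a fractional vertex even with two covariates; it projects to $\boldsymbol{z}=(1/2,1/2,1/2,1/2)$, the midpoint of the edge joining $(1,0,0,1)$ and $(0,1,1,0)$ in the projected polytope, exactly the configuration your argument cannot reach. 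To be fair, the paper commits the same sin --- its proof says ``because of Proposition~\ref{projectionprop} it suffices to prove the result for \eqref{projected}'' --- and the example shows the claim of Proposition~\ref{allcovariatesprop} concerning \eqref{original} is in fact false as stated. What you and the paper both actually establish is integrality of \eqref{projected}, which is what matters: by Proposition~\ref{projectionprop} the two LP relaxations share their optimal value, so one solves \eqref{projected} as an LP and lifts an optimal integral $\boldsymbol{z}$ to an integral $\boldsymbol{m}$ by the paper's greedy construction, preserving the polynomial-time conclusion.
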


Unfortunately, the following lemma, that we prove in Appendix B, shows that the formulations can fail to be integral for non-nested covariates even if $P=3$.

\begin{lemma}\label{examplelemma} Even for $P=3$ and $K_1=K_2=K_3=3$, there exist covariates for which the LP relaxations of \eqref{original} and \eqref{projected}  fail to be integral.
\end{lemma}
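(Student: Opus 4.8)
The plan is to reduce the statement to a single explicit instance and to certify non-integrality through an integrality (LP) gap, which is cleaner than directly verifying that a candidate fractional point is a vertex. By Proposition~\ref{projectionprop} the LP relaxations of \eqref{original} and \eqref{projected} coincide, so it suffices to exhibit covariate data with $P=3$ and $K_1=K_2=K_3=3$ for which the relaxation of the smaller formulation \eqref{projected} is not integral. I would read \eqref{projected} combinatorially: a point $\boldsymbol z\in\{0,1\}^{\mathcal L}$ with $\sum_{\ell}z_\ell=T$ selects a subset of $\mathcal L$ of size $T$, and the objective measures how far the induced counts $\sum_{\ell\in\mathcal L_{p,k}}z_\ell$ deviate from the template counts $N_{p,k}$; a feasible point attains objective $0$ precisely when it reproduces every marginal of the template. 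The guiding principle is that with two covariates the incidence structure is bipartite and totally unimodular (this is what Proposition~\ref{allcovariatesprop} exploits), whereas three non-nested covariates realize a three-dimensional matching, where a fractional system of representatives can exist without an integral one.

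Concretely, I would set $T=3$ and $N_{p,k}=1$ for every $p$ and $k$, so the template carries one unit in each category of each covariate, and take $\mathcal L$ to consist of six units whose covariate triples $(x_{\ell,1},x_{\ell,2},x_{\ell,3})$ are $(1,1,1),(1,2,2),(2,2,1),(2,3,3),(3,3,2),(3,1,3)$. In each covariate every category occurs in exactly two of the six units, so the uniform half-weight assignment $z_\ell=\tfrac12$ gives $\sum_{\ell}z_\ell=3=T$ and $\sum_{\ell\in\mathcal L_{p,k}}z_\ell=2\cdot\tfrac12=1=N_{p,k}$ for all $p,k$. Hence this fractional point is feasible for the LP relaxation with all $v_{p,k}=0$, and since $v_{p,k}\ge 0$ the LP optimum is exactly $0$.

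Next I would argue integral infeasibility. Because $\sum_{\ell}z_\ell=T=3$ forces exactly three chosen units, and matching the first covariate's marginals requires one unit from each of its three categories, any integral solution of objective $0$ would be a ``rainbow'' triple, i.e. three units that use each category exactly once in all three covariates simultaneously. Grouping the six units into the three pairs sharing a common first-covariate value, a rainbow triple picks one unit from each pair, giving eight candidates; a finite check shows that every candidate repeats a category in the second or the third covariate, so no rainbow exists. Thus every integral feasible point has objective at least $1$, the integer optimum is strictly positive, and it exceeds the LP optimum $0$. Since a linear program attains its optimum at an extreme point and the optimal value $0$ is not achieved by any integral point, the relaxation of \eqref{projected} must have a fractional extreme point; by Proposition~\ref{projectionprop} the same holds for \eqref{original}, establishing the lemma.

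The only genuine work is locating the six-unit configuration: it must be balanced in every covariate (so that the half-integral matching is feasible) yet contain no rainbow triple (so that no integral match exists). This is exactly the obstruction separating $P=2$ from $P\ge 3$, and it mirrors the failure of a Hall-type theorem to guarantee integral systems of distinct representatives in three dimensions. I expect the construction and certification of such a configuration, rather than any subsequent computation, to be the crux; once the six triples are fixed, the remaining verification is a routine enumeration.
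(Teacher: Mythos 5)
Your proof is correct, and it certifies non-integrality by a genuinely different mechanism than the paper. The paper's proof (Appendix B) builds an instance of exactly the same type as yours --- $T=3$, $L=6$, $N_{p,k}=1$, six covariate triples in which every category of every covariate appears in exactly two units and no ``rainbow'' triple exists --- but then writes out the LP relaxation of \eqref{projected} explicitly and verifies non-integrality \emph{computationally}: using the vertex-enumeration code CDDLib, it checks that the polyhedron has $11$ fractional extreme points out of $31$, and in particular that $z_{\ell_i}=1/2$ for all $i$ with $v\equiv 0$ is itself a fractional vertex. You instead prove a positive integrality gap: the half-integral point shows the LP optimum is $0$, your finite check of the eight candidate triples shows every integral solution has objective at least $1$, and since the (pointed, bounded-below) LP attains its optimum at an extreme point, some optimal extreme point must be fractional; Proposition~\ref{projectionprop} then transfers the conclusion to \eqref{original}, exactly as the paper implicitly does. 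Your route buys a self-contained, hand-checkable argument with no reliance on software, and it actually establishes the strictly stronger fact of a nonzero LP gap for this objective (a formulation can have fractional vertices without exhibiting a gap); what it gives up is the explicit identification of a fractional vertex and the finer structural information the enumeration provides. One small caveat: your aside that Proposition~\ref{allcovariatesprop} rests on bipartiteness/total unimodularity is only heuristic --- the paper proves that proposition by a direct convex-combination argument (Lemma~\ref{twocovlemma}), not by a TU matrix argument --- but this does not affect the validity of your proof.
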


This loss of integrality for more general covariate structures is not surprising as the problem is NP-hard. Fortunately,  enough of the formulation strength is preserved so that formulation \eqref{projected} remains extremely effective.
Indeed the size and strength of formulation \eqref{projected} allows to solve all the matching problems in our case study in a few seconds in a regular laptop (see Table \ref{match_time}) even though they do not satisfy the covariate assumptions of Proposition~\ref{allcovariatesprop}.


\subsection{Performance of the linear-sized formulation}
\label{sec:performance}

To evaluate the performance of the linear-sized formulation, we implemented it in the new function \texttt{cardmatch} in \texttt{designmatch} for \texttt{R} \citep{zubizarreta2018designmatch} and tested it in data sets of different sizes.
We increased the largest exposure sample in our data set ($L = 70118$) up to ten times ($L = 701180$) by creating random copies of it.
For each copy, we randomly modified all the nominal covariates by adding 1, 0, or -1, and truncating the resulting values to preserve the original ranges.
We also created templates of different sizes of up to 10000 observations.

Table \ref{per} shows the computing times.
The largest data set we considered had a template sample size of $T = 10000$ and an exposure sample size of $L = 701180$, and took approximately three minutes.
Most of the matchings took less than two minutes.
Naturally, the computing time tends to increase both with the template and exposure sample sizes.

Table \ref{match_time} shows the matching time in our case study with 3, 5, and 10 levels of exposure, and a template sample size of $T = 1000$.
The table shows that all matching times are well under a minute, and most of them are under 10 seconds.
The computing times do not increase monotonically with the sample size in part due to random variation in the generation of the covariate values.
We also compared the performance of the linear-sized formulation in \texttt{cardmatch} to other matching packages based on quadratic-formulations, such as \texttt{optmatch} \citep{hansen2007flexible} and \texttt{rcbalance} \citep{pimentel2015large}.
Trying to run \texttt{optmatch} results in an out-of-memory error, even for our original case study sample ($T = 1000$ and $L = 70118$).
In the case of \texttt{rcbalance}, we were only able to match the original sample ($L = 70118$) to template sizes up to size $T = 6000$. 
For larger template sizes, we were not able to get a solution in the allowed time of 8 hours.  

\begin{table}[H]
\begin{center}
\caption{Computing time (in minutes) of the proposed matching formulation as implemented in the new function \texttt{cardmatch} in \texttt{designmatch} for \texttt{R} \citep{zubizarreta2018designmatch}. We considered different template samples sizes (ranging from 1000 to 10000) and varying exposure sample sizes, increasing from the largest exposure sample size in our case study, 70118, to $10 \times 70118 = 701180$.\label{per}}
\begin{adjustbox}{scale=0.7}
\begin{tabular}{ccccccccccc}
  & \multicolumn{10}{c}{Exposure size $L$}\\
Template size $T$ & 70118 & 140236 & 210354 & 280472 & 350590 & 420708 & 490826 & 560944 & 631062 & 701180 \\ 
  \hline 
1000 & 0.28 & 0.50 & 0.65 & 0.79 & 1.11 & 1.20 & 1.49 & 2.13 & 2.58 & 2.63 \\ 
  2000 & 0.20 & 0.72 & 0.91 & 1.14 & 1.49 & 1.56 & 1.87 & 2.20 & 2.53 & 2.67 \\ 
  3000 & 0.19 & 0.73 & 1.08 & 1.37 & 1.62 & 1.51 & 2.02 & 2.26 & 2.53 & 3.15 \\ 
  4000 & 0.22 & 0.44 & 1.09 & 1.57 & 1.74 & 1.98 & 2.00 & 2.29 & 2.48 & 2.62 \\ 
  5000 & 0.18 & 0.33 & 0.87 & 1.26 & 1.52 & 1.94 & 3.05 & 1.73 & 2.93 & 3.51 \\ 
  6000 & 0.26 & 0.47 & 0.64 & 1.66 & 2.07 & 2.40 & 2.78 & 2.94 & 3.18 & 3.04 \\ 
  7000 & 0.18 & 0.36 & 0.56 & 0.76 & 1.62 & 2.09 & 2.28 & 2.36 & 2.71 & 8.54 \\  
8000 & 0.25 & 0.40 & 0.57 & 0.82 & 1.87 & 2.25 & 2.42 & 2.95 & 3.08 & 3.66 \\ 
  9000 & 0.25 & 0.46 & 0.74 & 0.82 & 0.99 & 2.18 & 2.94 & 3.13 & 4.13 & 3.85 \\ 
  10000 & 0.19 & 0.39 & 0.63 & 0.83 & 1.08 & 2.55 & 2.58 & 2.93 & 3.13 & 3.42 \\  
   \hline
\end{tabular}

\end{adjustbox}
\end{center}
\end{table}

%
%

\begin{table}[H]
\begin{center}
\caption{Computing time (in minutes) using \texttt{cardmatch} in the case study with 3, 5, and 10 levels of exposure to the earthquake.\label{match_time}}
\begin{adjustbox}{scale=0.7}
\begin{tabular}{ccc}
Exposure level & Sample size      & Time (min) \\ \hline 
1                                                             & 18208      & 0.05       \\
2                                                             & 70118      & 0.34       \\
3                                                             & 32953      & 0.08       \\ \hline
1                                                             & 22075      & 0.06       \\
2                                                             & 25977      & 0.06       \\
3                                                             & 24896      & 0.06       \\
4                                                             & 24279      & 0.06       \\
5                                                             & 24052      & 0.06       \\  \hline
1                                                             & 12084      & 0.03       \\
2                                                             & 9991       & 0.03       \\
3                                                             & 12513      & 0.04       \\
4                                                             & 13464      & 0.03       \\
5                                                             & 13119      & 0.03       \\
6                                                             & 11777      & 0.04       \\
7                                                             & 12813      & 0.03       \\
8                                                             & 11466      & 0.03       \\
9                                                             & 12071      & 0.03       \\
10                                                            & 11981      & 0.03      \\  \hline
\end{tabular}

\end{adjustbox}
\end{center}
\end{table}


\section{Results from the case study}
\label{sec:results}


\subsection{Assessing balance}

Table \ref{tab:balance} shows covariate balance for the matchings with 3, 5, and 10 levels of exposure to the earthquake.
In the template (target) sample there are 490 female and 510 male students, and the same is true across all exposure levels.
Due to space constraints, the table shows the counts for 3 covariates only (gender, school SES, and mother's education), but same pattern holds for all other 11 covariates (see Figure \ref{fig:balance} in the Supplementary Materials).
In other words, the marginal distributions of the 14 covariates are perfectly balanced relative to each other and to the representative template sample.
As a result, Figure \ref{fig:balance} shows that means of the indicators for the 78 covariate categories are perfectly balanced after matching.

\begin{table}[H]
\begin{center}
\caption{Distributional balance or fine balance across matched samples for 3, 5, and 10 levels of exposure to the earthquake.  Due to space constraints, the counts are shown for three covariates only but the same pattern holds for all other 11 covariates.\label{tab:balance}}
\subfloat[3 exposure levels]{
\begin{adjustbox}{scale=0.66}
\begin{tabular}{lccccc}
    & & & \multicolumn{3}{c}{Exposure level} \\
Covariate & Template & & 1 & 2 & 3 \\
\hline 
Gender &  & & & &    \\
\hspace{0.5cm}Male & 490 & & 490 & 490 & 490 \\ 
\hspace{0.5cm}Female & 510 & & 510 & 510 & 510 \\ 
School SES & & &  &  &   \\ 
\hspace{0.5cm}Low & 90 & & 90 & 90 & 90 \\ 
\hspace{0.5cm}Mid-low & 318 & & 318 & 318 & 318 \\ 
\hspace{0.5cm}Medium & 303 & & 303 & 303 & 303 \\ 
\hspace{0.5cm}Mid-high & 174 & & 174 & 174 & 174 \\ 
\hspace{0.5cm}High & 115 & & 115 & 115 & 115 \\ 
Mother's education \hspace{0.5cm} & & &  &  &    \\
\hspace{0.5cm}Primary & 321 & & 321 & 321 & 321 \\ 
\hspace{0.5cm}Secondary & 434 & & 434 & 434 & 434 \\ 
\hspace{0.5cm}Technical & 120 & & 120 & 120 & 120 \\ 
\hspace{0.5cm}College & 115 & & 115 & 115 & 115 \\ 
\hspace{0.5cm}Missing & 10 & & 10 & 10 & 10 \\ 
\hspace{1cm}\vdots & \multicolumn{5}{c}{} \\ \hline
\end{tabular}

\end{adjustbox}
}\\
\subfloat[5 exposure levels]{
\begin{adjustbox}{scale=0.66}
\begin{tabular}{lccccccc}
  & & & \multicolumn{5}{c}{Exposure level}\\
Covariate & Template & & 1 & 2 & 3 & 4 & 5\\
  \hline
Gender &  &  &  &  &  & &  \\ 
\hspace{0.5cm}Male & 490 & & 490 & 490 & 490 & 490 & 490 \\ 
\hspace{0.5cm}Female & 510 & & 510 & 510 & 510 & 510 & 510 \\ 
School SES & & & &  &  &  &    \\ 
\hspace{0.5cm}Low & 90 & & 90 & 90 & 90 & 90 & 90 \\ 
\hspace{0.5cm}Mid-low & 318 & & 318 & 318 & 318 & 318 & 318 \\ 
\hspace{0.5cm}Medium & 303 & & 303 & 303 & 303 & 303 & 303 \\ 
\hspace{0.5cm}Mid-high & 174 & & 174 & 174 & 174 & 174 & 174 \\ 
\hspace{0.5cm}High & 115 & & 115 & 115 & 115 & 115 & 115 \\ 
Mother's education\hspace{0.5cm} &  &  & & &  &  &    \\ 
\hspace{0.5cm}Primary & 321 & & 321 & 321 & 321 & 321 & 321 \\ 
\hspace{0.5cm}Secondary & 434 & & 434 & 434 & 434 & 434 & 434 \\ 
\hspace{0.5cm}Technical & 120 & & 120 & 120 & 120 & 120 & 120 \\ 
\hspace{0.5cm}College & 115 & & 115 & 115 & 115 & 115 & 115 \\ 
\hspace{0.5cm}Missing & 10 & & 10 & 10 & 10 & 10 & 10 \\ 
\hspace{1cm}\vdots & \multicolumn{7}{c}{} \\ \hline
\end{tabular}

\end{adjustbox}
}\\
\subfloat[10 exposure levels]{
\begin{adjustbox}{scale=0.66}
\begin{tabular}{lcccccccccccc}
  & & & \multicolumn{10}{c}{Exposure level} \\
Covariate & Template & & 1 & 2 & 3 & 4 & 5 & 6 & 7 & 8 & 9 & 10 \\
  \hline 
Gender &  &  &  &  &  &  &  &  &  &  & &  \\ 
\hspace{0.5cm}Male & 490 & & 490 & 490 & 490 & 490 & 490 & 490 & 490 & 490 & 490 & 490 \\ 
\hspace{0.5cm}Female & 510 & & 510 & 510 & 510 & 510 & 510 & 510 & 510 & 510 & 510 & 510 \\ 
School SES &  &  &  & & &  &  &  &  &  &  &    \\ 
\hspace{0.5cm}Low & 90 & & 90 & 90 & 90 & 90 & 90 & 90 & 90 & 90 & 90 & 90 \\ 
\hspace{0.5cm}Mid-low & 318 & & 318 & 318 & 318 & 318 & 318 & 318 & 318 & 318 & 318 & 318 \\ 
\hspace{0.5cm}Medium & 303 & & 303 & 303 & 303 & 303 & 303 & 303 & 303 & 303 & 303 & 303 \\ 
\hspace{0.5cm}Mid-high & 174 & & 174 & 174 & 174 & 174 & 174 & 174 & 174 & 174 & 174 & 174 \\ 
\hspace{0.5cm}High & 115 & & 115 & 115 & 115 & 115 & 115 & 115 & 115 & 115 & 115 & 115 \\ 
Mother's education\hspace{0.5cm} &  &  & & &  &  &  &  &  &  &  &    \\ 
\hspace{0.5cm}Primary & 321 & & 321 & 321 & 321 & 321 & 321 & 321 & 321 & 321 & 321 & 321 \\ 
\hspace{0.5cm}Secondary & 434 & & 434 & 434 & 434 & 434 & 434 & 434 & 434 & 434 & 434 & 434 \\ 
\hspace{0.5cm}Technical & 120 & & 120 & 120 & 120 & 120 & 120 & 120 & 120 & 120 & 120 & 120 \\ 
\hspace{0.5cm}College & 115 & & 115 & 115 & 115 & 115 & 115 & 115 & 115 & 115 & 115 & 115 \\ 
\hspace{0.5cm}Missing & 10 & & 10 & 10 & 10 & 10 & 10 & 10 & 10 & 10 & 10 & 10 \\ 
\hspace{1cm}\vdots & \multicolumn{12}{c}{} \\ \hline
\end{tabular}

\end{adjustbox}
}
\end{center}
\end{table}

\subsection{Visualizing effects}

One of the advantages of matching methods is that their adjustments are transparent, as illustrated in Table \ref{tab:balance}.
Also, the adjustments are made without looking at the outcomes, which aids the objectivity of the observational study \citep{rubin2008objective}.
Furthermore, the structure of the data after the adjustments is simple enough that we can analyze the effects by simply taking differences in means and even by plotting the outcomes.
This is illustrated in the Figure \ref{fig:boxplots}.

Relative to level 1, for each exposure level the figure shows the distribution of matched pair differences in outcomes for both school attendance (Figure \ref{fig:boxplots}(a)) and PSU scores (Figure \ref{fig:boxplots}(b)).
As expected, Figure \ref{fig:boxplots}(a) shows that as the exposure level increases, the impact on school attendance becomes more severe.
However, this pattern stands in stark contrast to the one in Figure \ref{fig:boxplots}(b), where we see no effect of the earthquake on university admission test scores.
This is striking, given the magnitude of the earthquake and its impact on school attendance that year.
Previous studies have documented a positive effect of school attendance on student achievement as measured by test scores (e.g., \citealt{lamdin1996evidence, gottfried2010evaluating, paredes2011should}), but this does not appear to be the case in the last year of school in Chile in terms of PSU scores.
In the following subsection we estimate the actual effects.

\begin{landscape}
\begin{figure}
\centering
\caption{Matched-pair differences in outcomes with respect to level 1}
\subfloat[Pair differences for attendance]{
\centering
\makebox{\includegraphics[width=0.65\textwidth]{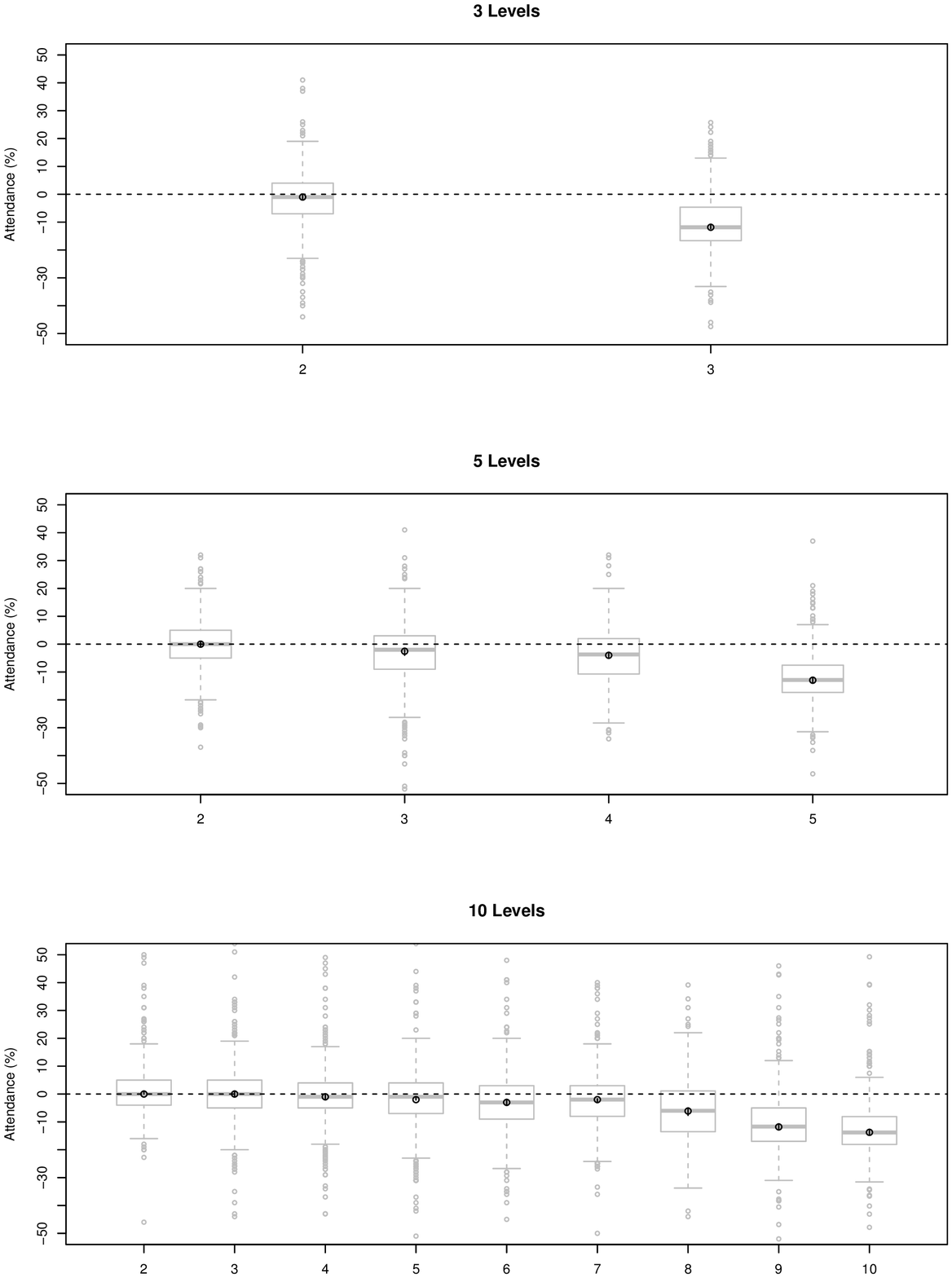}}
\label{fig:attendance}
}%
\subfloat[Pair differences for the PSU scores]{
\centering
\makebox{\includegraphics[width=0.65\textwidth]{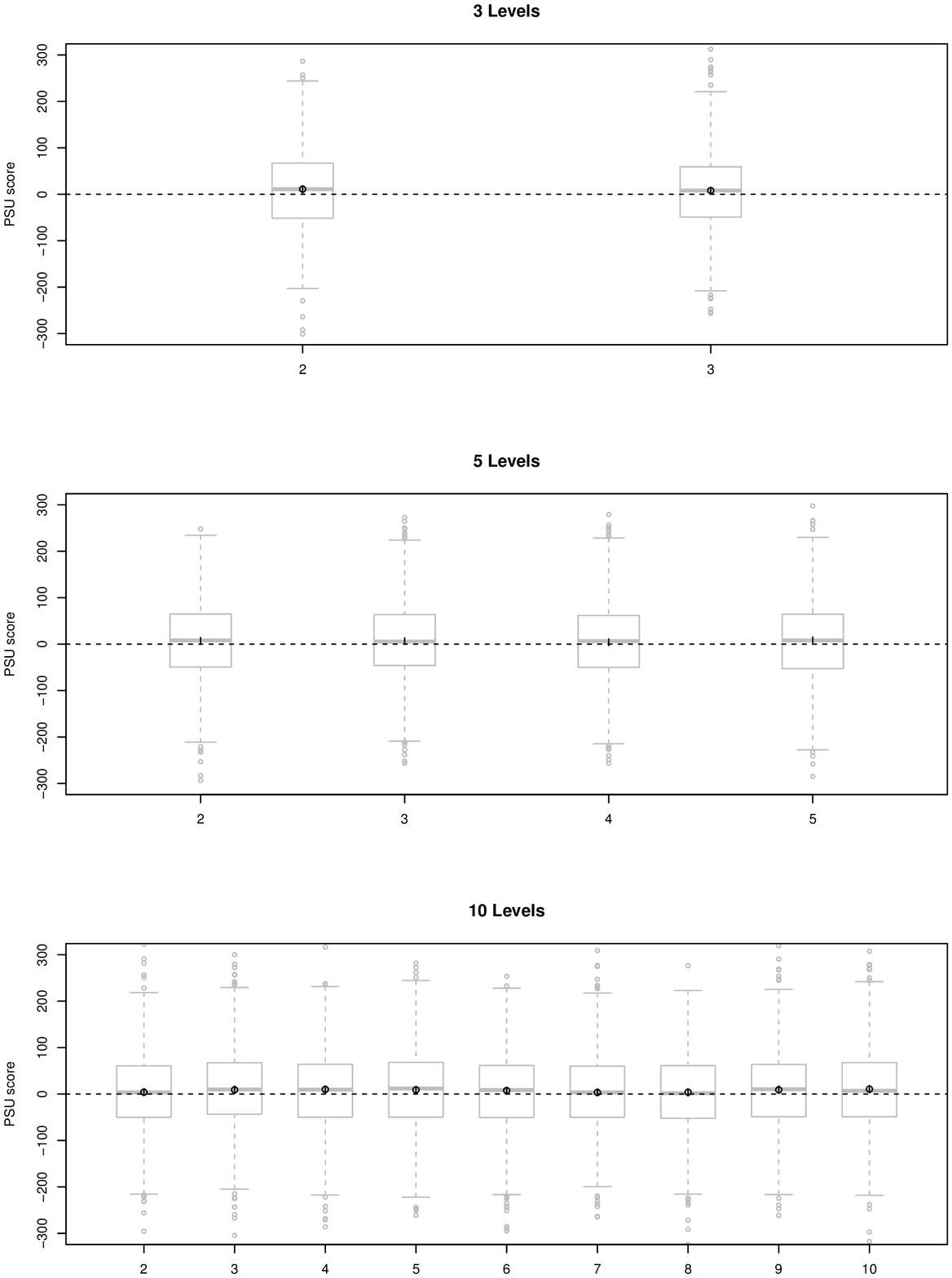}}
\label{fig:psu}
}
\label{fig:boxplots}
\end{figure}
\end{landscape}

\subsection{Estimating contrasts}

To estimate the effect of the earthquake, we contrast the outcomes in the matched samples under the different levels of exposure to the earthquake.
For this, we compute Hodges-Lehmann point estimates as well as 95\% confidence intervals following the procedure in \cite{hollander2015nonparametric}.
We compare the outcome of interest for each level $\mathcal{L} \in \mathbb{L}$ with respect to the control level $\mathcal{L}_1$.
The null hypothesis $H_0$ is defined as
\begin{equation*}
H_0: \ \tau^{i}_u = \tau^{i}_1 \ \mathrm{if} \ |R^i_u-R^i_1|<r^{i\ast}_{\alpha/2}
\end{equation*}
where $\tau^i_j$ is the treatment effect for level $j$ in $\mathcal{L}$, $R^i_j$ is the sum of the within-matched group ranks, and $r^{i\ast}_{\alpha/2}$ is a scalar such that, under the null hypothesis,
\begin{equation*}
P_0(|R^i_u-R^i_1|<r^{i\ast}_{\alpha/2},u=2,...,L) = 1-\alpha.
\end{equation*}
In this way, by setting an experiment-wise error rate of $\alpha$, we address the issue of multiple comparisons across different level levels.
We obtain the value of $r^{i\ast}_{\alpha/2}$ under the assumption that under the null hypothesis $H_0$ all $(L!)^n$ rank combinations are equally likely.

Table \ref{outcomes_cc} in Appendix E in the Supplementary Materials shows the effect estimates for the different exposure levels.
The results are consistent with the boxplots in Figure \ref{fig:boxplots}.
For example, in Table \ref{outcomes_cc}(c) with 10 exposure levels, the impact of the earthquake on school attendance increases with the exposure level, having an effect of 3 percentage points for level 6 and of almost 14 points for level 10, both relative to exposure level 1.
For exposure levels lower than 6, the effects on school attendance are not statistically significant.
In the second column of the table, it is again surprising to see that despite this negative and significant impact on school attendance, the earthquake did not have a significant impact on the PSU test scores, a test that was taken at the end of that school year.

To assess the sensitivity of these findings to hidden biases, we compute Rosenbaum (\citeyear{rosenbaum1987sensitivity,rosenbaum2002observational,rosenbaum2015two}) bounds.
These bounds quantify the magnitude that an unobserved covariate would need to have in order to explain away a significant effect and qualitatively change its interpretation.
This magnitude is summarized by the parameter $\Gamma_c$ in Table \ref{sensitivity}.
We see that for high exposure levels the estimates are quite insensitive to hidden biases.
For example, with 10 exposure levels (Table \ref{sensitivity}(c)) in order to explain away the estimated effect of exposure level 10 relative to level 1 on school attendance, an unobserved covariate that is perfectly associated with this outcome would need to increase the odds of being exposed to level 10 as opposed to level 1 by a factor of 28.
For exposure levels 9 and 10, these values are 13.6 and 3.8, respectively.
By way of contrast, Hammond's (\citeyear{hammond1964smoking}) classical study on the effect of smoking on lung cancer becomes sensitive at $\Gamma_c \approx 6$ (\citealt{rosenbaum2002observational}; Section 4.3.2).

\begin{table}[H]
\begin{center}
\caption{Critical $\Gamma$ ($\Gamma_c$) for Sensitivity Analysis\label{sensitivity}}
\subfloat[3 levels]{
\begin{adjustbox}{scale=0.7}
\begin{tabular}{lc}
Level of exposure (compared to level 1) & $\Gamma_c$ \\ 
  \hline
2 & 1.34 \\ 
  3 & 10.68 \\ 
   \hline
\end{tabular}

\end{adjustbox}
}\\
\subfloat[5 levels]{
\begin{adjustbox}{scale=0.7}
\begin{tabular}{lc}
Level of exposure (compared to level 1) & $\Gamma_c$\\ 
  \hline
3 & 1.82  \\ 
  4 & 2.58 \\ 
  5 & 17.64 \\ 
   \hline
\end{tabular}

\end{adjustbox}
}\\
\subfloat[10 levels]{
\begin{adjustbox}{scale=0.7}
\begin{tabular}{lc}
Level of exposure (compared to level 1) & $\Gamma_c$  \\ 
  \hline
5 & 1.36  \\ 
  6 & 1.95 \\ 
  7 & 1.72 \\ 
  8 & 3.48 \\ 
  9 & 9.49  \\ 
  10 & 18.47 \\ 
   \hline
\end{tabular}

\end{adjustbox}
}
\end{center}
\end{table}

\section{Summary and remarks}
\label{sec:summary}

In this paper, we have proposed a new approach to address three challenges in matching in observational studies.
The first challenge relates to handling multi-valued treatments (possibly tens or hundreds of them, either ordered or un-ordered) without estimating the generalized propensity score, directly and flexibly balancing the observed covariates, and facilitating transparent analysis for the outcomes, such as graphical displays.
The second challenge relates to building matched samples that are not only balanced but that are representative of a population of interest, in such as way that we obtain ``representative estimates'' of target causal effects.
Arguably, this second challenge goes beyond matching and also applies to regression as a method for covariate adjustment for causal inference (see \citealt{aronow2016does}).
The third challenge relates to matching in larger data sets than usually considered, with hundred of thousands of observations.

To overcome these challenges, instead of simultaneously matching across treatments groups, we separately match each treatment group to a representative random sample of the population of interest.
For this, we impose exact distributional or fine balance constraints.
This guarantees that the marginal distributions of the matched samples across treatment groups will be identical to each other and to the template in terms of the observed covariates.
The effectiveness of the approach relies on the use of a linear-sized MIP formulation, which we show is (i) as strong as much larger quadratic-sized formulations, and (ii) is integral (and hence polynomially solvable) when only two covariates or nested covariates are considered.
This integrality property is not preserved for more general covariate structures, but the formulation still retains its  practical effectiveness in such settings.

We have used this new matching approach to estimate the impact of an earthquake on the educational achievement of high school students.
In particular, we estimated the effect of levels of exposure to the 2010 Chilean earthquake on both school attendance and university admission test scores.
We documented negatively increasing effects of the strength of the earthquake on school attendance, but no effect on university admission test scores.

\onehalfspacing
\bibliography{mybibliography11,mybibliography_extra}
\bibliographystyle{asa}

\pagebreak
\setcounter{page}{1}
\section*{Online Supplementary Materials}

\subsection*{Appendix A: Earthquake intensity levels}

Using the estimated PGA, we created three measures of exposure to the earthquake: one with three levels, another with five levels, and a final one with ten levels.

We defined the exposure with three intensity levels as follows.
\begin{itemize}
\item[$-$] Low PGA (PGA$<0.08$): felt by many people indoors; no buildings received damage.
\item[$-$] Medium PGA ($0.08\leq$ PGA $\leq 0.25$): felt by most or all people indoors; some people were frightened; damages in some (non-resistant) buildings.
\item[$-$] High PGA (PGA$>0.25$): many people were frightened; severe shaking; damages in resistant buildings.
\end{itemize}

For the versions of the exposure with five and ten levels, we divided the students into PGA quintiles and deciles, respectively.

\subsection*{Appendix B: Covariate profiles}

\begin{table}[H]
\begin{center}
\caption{\label{template_comparison}Covariate profile of the population and template sample}
\begin{adjustbox}{scale=0.67}
\begin{tabular}{lcc}
\hline
 Covariate & Population & Template \\ 
  \hline
Female & 0.54 & 0.51 \\
Indigenous & & \\  
\hspace{0.5cm}Indigenous  & 0.08 & 0.07 \\  
\hspace{0.5cm}Missing & 0.15 & 0.14 \\
Father's education & & \\
\hspace{0.5cm}Secondary & 0.39 & 0.40 \\ 
\hspace{0.5cm}Technical & 0.09 & 0.09 \\ 
\hspace{0.5cm}College & 0.15 & 0.14 \\ 
\hspace{0.5cm}Missing & 0.05 & 0.04 \\
Mother's education & & \\
\hspace{0.5cm}Secondary & 0.41 & 0.43 \\ 
\hspace{0.5cm}Technical & 0.13 & 0.12 \\ 
\hspace{0.5cm}College & 0.12 & 0.12 \\ 
\hspace{0.5cm}Missing & 0.01 & 0.01 \\
Household income (2008 CL\$1000) & & \\ 
\hspace{0.5cm}100-200 & 0.26 & 0.26 \\ 
\hspace{0.5cm}200-400 & 0.30 & 0.31 \\ 
\hspace{0.5cm}400-600 & 0.13 & 0.12 \\ 
\hspace{0.5cm}600-1400 & 0.13 & 0.14 \\ 
\hspace{0.5cm}1400 or more & 0.09 & 0.09 \\ 
\hspace{0.5cm}Missing & 0.01 & 0.02 \\ 
Number of books at home & & \\
\hspace{0.5cm}1-10 & 0.19 & 0.19 \\ 
\hspace{0.5cm}11-50 & 0.46 & 0.46 \\ 
\hspace{0.5cm}51-10 & 0.16 & 0.16 \\ 
\hspace{0.5cm}More than 100 & 0.16 & 0.16 \\ 
\hspace{0.5cm}Missing & 0.01 & 0.01 \\  
\hline
\end{tabular}

\end{adjustbox}
\end{center}
\end{table}

\begin{table}[H]
\begin{center}
\caption*{Table \ref{template_comparison} (continued): Covariate profile of the population and template sample}
\begin{adjustbox}{scale=0.67}
\begin{tabular}{lcc}
 Covariate & Population & Template \\ 
  \hline
 Student's attendance (deciles) & & \\ 
\hspace{0.5cm}2 & 0.12 & 0.11 \\ 
\hspace{0.5cm}3 & 0.08 & 0.09 \\ 
\hspace{0.5cm}4 & 0.10 & 0.10 \\ 
\hspace{0.5cm}5 & 0.13 & 0.13 \\ 
\hspace{0.5cm}6 & 0.08 & 0.08 \\ 
\hspace{0.5cm}7 & 0.09 & 0.09 \\ 
\hspace{0.5cm}8 & 0.10 & 0.09 \\ 
\hspace{0.5cm}9 & 0.11 & 0.10 \\ 
\hspace{0.5cm}10 & 0.15 & 0.16 \\
Student's GPA 2008 (deciles) & & \\
\hspace{0.5cm}2 & 0.11 & 0.11 \\ 
\hspace{0.5cm}3 & 0.11 & 0.09 \\ 
\hspace{0.5cm}4 & 0.10 & 0.11 \\ 
\hspace{0.5cm}5 & 0.10 & 0.10 \\ 
\hspace{0.5cm}6 & 0.10 & 0.11 \\ 
\hspace{0.5cm}7 & 0.09 & 0.09 \\ 
\hspace{0.5cm}8 & 0.10 & 0.10 \\ 
\hspace{0.5cm}9 & 0.09 & 0.10 \\ 
\hspace{0.5cm}10 & 0.08 & 0.08 \\ 
Student's test scores (deciles) & & \\ 
\hspace{0.5cm}2 & 0.08 & 0.07 \\ 
\hspace{0.5cm}3 & 0.09 & 0.10 \\ 
\hspace{0.5cm}4 & 0.09 & 0.10 \\ 
\hspace{0.5cm}5 & 0.10 & 0.10 \\ 
\hspace{0.5cm}6 & 0.10 & 0.11 \\ 
\hspace{0.5cm}7 & 0.11 & 0.10 \\ 
\hspace{0.5cm}8 & 0.12 & 0.12 \\ 
\hspace{0.5cm}9 & 0.12 & 0.11 \\ 
\hspace{0.5cm}10 & 0.12 & 0.13 \\ 
\hspace{0.5cm}Missing & 0.01 & 0.01 \\
School administration & & \\  
\hspace{0.5cm}Public & 0.34 & 0.34 \\
\hspace{0.5cm}Private subsidized (voucher) & 0.55 & 0.55 \\ 
Rural school & 0.03 & 0.02 \\ 
Catholic school & 0.24 & 0.23 \\ 
School SES & & \\
\hspace{0.5cm}Mid-low & 0.32 & 0.32 \\ 
\hspace{0.5cm}Medium & 0.29 & 0.30 \\ 
\hspace{0.5cm}Mid-high & 0.18 & 0.17 \\ 
\hspace{0.5cm}High & 0.11 & 0.12 \\ 
School's test scores (deciles) & & \\  
\hspace{0.5cm}2 & 0.07 & 0.07 \\ 
\hspace{0.5cm}3 & 0.09 & 0.07 \\ 
\hspace{0.5cm}4 & 0.09 & 0.09 \\ 
\hspace{0.5cm}5 & 0.10 & 0.11 \\ 
\hspace{0.5cm}6 & 0.11 & 0.12 \\ 
\hspace{0.5cm}7 & 0.10 & 0.11 \\ 
\hspace{0.5cm}8 & 0.12 & 0.12 \\ 
\hspace{0.5cm}9 & 0.12 & 0.11 \\ 
\hspace{0.5cm}10 & 0.13 & 0.13 \\ 
   \hline
\end{tabular}
\end{adjustbox}
\end{center}
\end{table}

\subsection*{Appendix C: Proofs}

To prove Proposition ~\ref{allcovariatesprop} we use the following lemma.

\begin{lemma}\label{twocovlemma}
If $P\leq 2$  and $\bra{\boldsymbol{z},\;\boldsymbol{v}}$ is an non-integral feasible solution for the LP relaxation of \eqref{original}, then one of the following conditions holds:
\begin{enumerate}
  \item There exist $\underline{\boldsymbol{z}}$ and $\overline{\boldsymbol{z}}$ such that
  \begin{itemize}\label{firstcaselemma}
    \item $\sum\nolimits_{\ell \in \mathcal{L}_{p,k}}   \underline{z}_\ell=\sum\nolimits_{\ell \in \mathcal{L}_{p,k}}   \overline{z}_\ell=\sum\nolimits_{\ell \in \mathcal{L}_{p,k}}   {z}_\ell  $ for all $p\in \mathcal{P}$ and $k\in\mathcal{K}(p)$,
    \item $  \underline{z}_\ell, \overline{z}_\ell\in [0,1]$ for all $\ell \in \mathcal{L}$,
    \item $\underline{\boldsymbol{z}}\neq\overline{\boldsymbol{z}}$, and
    \item $\boldsymbol{z}=(1/2)\underline{\boldsymbol{z}}+(1/2)\overline{\boldsymbol{z}}$.
  \end{itemize}
  \item There exist $q_1,q_2\in \mathcal{P}$, $u_1\in\mathcal{K}(q_1)$, $u_2\in\mathcal{K}(q_2)$, $\bra{\underline{\boldsymbol{z}},\underline{\boldsymbol{v}}}$ and $\bra{\overline{\boldsymbol{z}},\overline{\boldsymbol{v}}}$ such that
  \begin{itemize}\label{secondcaselemma}
    \item $\sum\nolimits_{\ell \in \mathcal{L}_{p,k}}   \underline{z}_\ell=\sum\nolimits_{\ell \in \mathcal{L}_{p,k}}   \overline{z}_\ell=\sum\nolimits_{\ell \in \mathcal{L}_{p,k}}   {z}_\ell  $ for all $p\in \mathcal{P}$ and $k\in\mathcal{K}(p)$ such that $\bra{p,k}\notin \set{\bra{q_1,u_1}, \bra{q_2,u_2}}$,
    \item $\abs{ \sum\nolimits_{\ell \in \mathcal{L}_{p,k}}   \underline{z}_\ell - N_{p,k}}   = \underline{v}_{p,k} $ and $\abs{ \sum\nolimits_{\ell \in \mathcal{L}_{p,k}}   \overline{z}_\ell - N_{p,k}}   = \overline{v}_{p,k} $ for all $\bra{p,k}\in \set{\bra{q_1,u_1}, \bra{q_2,u_2}}$,
  \item $\bra{\underline{\boldsymbol{z}},\underline{\boldsymbol{v}}}\neq\bra{\overline{\boldsymbol{z}},\overline{\boldsymbol{v}}}$, and
    \item $\bra{\boldsymbol{z},\boldsymbol{v}}=\frac{1}{2}\bra{\underline{\boldsymbol{z}},\underline{\boldsymbol{v}}}+\frac{1}{2}\bra{\overline{\boldsymbol{z}},\overline{\boldsymbol{v}}}$.
  \end{itemize}
\end{enumerate}
\end{lemma}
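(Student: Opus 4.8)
The plan is to prove the lemma by a perturbation argument carried out in the $\boldsymbol z$-variables: using Proposition~\ref{projectionprop} I pass to formulation \eqref{projected}, so that a feasible solution of the LP relaxation of \eqref{original} is recorded as a pair $(\boldsymbol z,\boldsymbol v)$ with $z_\ell=\sum_{t} m_{t,\ell}$. First I would normalize by assuming each fine-balance constraint \eqref{prjlpaa} is tight, i.e.\ $v_{p,k}=|\sum_{\ell\in\mathcal L_{p,k}} z_\ell-N_{p,k}|$ for every $(p,k)$; this is exactly the situation met when the lemma is applied to an extreme point, since the $v_{p,k}$ are unbounded above and hence sit on their lower envelope there. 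Under this normalization the task reduces to producing a single symmetric direction $\boldsymbol d$, supported on the fractional units, with $\sum_\ell d_\ell=0$ and with $\sum_{\ell\in\mathcal L_{p,k}} d_\ell=0$ at every category of zero imbalance; the two half-steps $\boldsymbol z\pm\epsilon\boldsymbol d$ then furnish $\underline{\boldsymbol z},\overline{\boldsymbol z}$, and the $\boldsymbol v$-coordinates follow from tightness.

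The combinatorial core is to encode $\boldsymbol d$ on the bipartite incidence graph $G$ whose left vertices are the categories of the first covariate, whose right vertices are those of the second (for $P=1$ I adjoin one dummy covariate with a single category, so $G$ is a star; $P=0$ is trivial), with one edge per fractional unit joining its two categories. Because $P\le 2$ this is a graph and not a hypergraph, so its cycle space and forest structure are available. If $G$ contains a cycle, I would alternate $+\epsilon,-\epsilon$ along its edges: each vertex meets exactly two cycle edges of opposite sign, so every marginal — and hence the total $\sum_\ell z_\ell=T$ of \eqref{sumtoTconstraint} — is preserved, $\boldsymbol v$ is unchanged, and taking $\epsilon$ small enough to keep $z_\ell\in[0,1]$ yields the first alternative. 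If $G$ is a forest, I would instead choose two categories on the \emph{same} side, joined by a path in $G$ and each of nonzero imbalance, and alternate $\pm\epsilon$ along that path; then all marginals are preserved except at the two endpoints $(q_1,u_1)$ and $(q_2,u_2)$ (so $q_1=q_2$), where they shift by $\pm\epsilon$, giving the second alternative.

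The forest case is where the real work lies. The $\boldsymbol v$-midpoint identity $v_{q_j,u_j}=\tfrac12\underline v_{q_j,u_j}+\tfrac12\overline v_{q_j,u_j}$ holds because, writing $\alpha_j=\sum_{\ell\in\mathcal L_{q_j,u_j}} z_\ell-N_{q_j,u_j}\neq 0$ for the tight endpoint imbalance, any $\epsilon<\min_j|\alpha_j|$ fixes the sign and gives $\tfrac12|\alpha_j+\epsilon|+\tfrac12|\alpha_j-\epsilon|=|\alpha_j|=v_{q_j,u_j}$; this is precisely why the endpoints must carry \emph{nonzero} imbalance. Nonzero imbalance is easy to secure — any leaf category has a single fractional unit and hence non-integer imbalance — but locating two such categories on the same side of $G$ inside one component is the main obstacle, and this is where the integrality of the template size $T$ is decisive: since $\sum_\ell z_\ell=T$ is an integer and the non-fractional units contribute integers, the categories of non-integer imbalance must occur in same-side groups of at least two. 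Turning this global parity fact into a same-side pair joined by a path — ruling out the degenerate configuration in which a component's only nonzero-imbalance categories sit on opposite sides — is the heart of the proof. The same analysis explains why the argument halts at $P=2$: for $P\ge 3$ the incidence structure is a hypergraph, minimal circuits need not be single paths or cycles, and integrality genuinely fails, as Lemma~\ref{examplelemma} records.
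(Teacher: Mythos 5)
Your reduction to \eqref{projected}, the encoding of fractional units as edges of a bipartite graph $G$ on the categories of the two covariates, and your cycle case all match the paper's proof (the paper builds the same object implicitly, as an alternating walk of fractional units). The genuine gap is in your forest case, and it sits exactly at the step you defer as ``the heart of the proof'': the claim that integrality of $T$ lets you find two nonzero-imbalance categories on the same side \emph{joined by a path}. That claim is false. Take $P=2$, $T=1$ with template unit $(k_1,k_1)$, and two level units $\ell_1,\ell_2$ with $\mathbf{x}_{\ell_1}=(k_1,k_1)$, $\mathbf{x}_{\ell_2}=(k_2,k_2)$ and $z_{\ell_1}=z_{\ell_2}=1/2$, with all constraints \eqref{prjlpaa} tight (so every $v_{p,k}=1/2$). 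This is a non-integral feasible solution; $G$ consists of two disjoint edges; all four categories have non-integer imbalance, so your parity fact holds (two per side), but the two same-side categories lie in \emph{different components} and no path joins them. Your construction therefore produces nothing here (case 1 is also unavailable, since every category is a singleton and preserving all marginals forces $\underline{\boldsymbol{z}}=\overline{\boldsymbol{z}}=\boldsymbol{z}$), while the lemma does hold --- via the configuration you explicitly rule out: perturb the single unit $\ell_1$ by $\pm\varepsilon$, which moves exactly two marginals, one on each side (so $q_1\neq q_2$), both with imbalance $1/2$, and the tightness and midpoint bullets are satisfied.

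You were forced into this dead end by reading case 2 as requiring the perturbed points to be feasible --- in particular to satisfy $\sum_{\ell\in\mathcal{L}} z_\ell=T$ --- which is why you insist on even, same-side paths. But case 2 asserts no such thing: only marginal preservation off the two special categories, tightness at them, distinctness, and the midpoint identity. The paper exploits precisely this freedom: its alternating walk ends either in a cycle (your case 1) or in a path whose two endpoints are leaf categories (each containing a single fractional unit, hence with $v_{p,k}\notin\mathbb{Z}_+$ and positive), and it perturbs along that path \emph{whatever its parity}; odd paths give opposite-side endpoints and perturbed $\boldsymbol{z}$'s that violate \eqref{sumtoTconstraint}, which the lemma's statement tolerates. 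Two smaller points: your tightness normalization silently drops the slack case, which the paper dispatches first by perturbing a single $v_{p,k}$, and which is needed because the lemma is stated for arbitrary feasible solutions, not just extreme points; and your feasibility worry is not unfounded --- it matters for how the lemma is invoked in Proposition~\ref{allcovariatesprop}, and in the example above any sum-preserving direction moves four marginals, outside the lemma's two-category format --- but that is a looseness in the paper's lemma-to-proposition step, not something you can repair by strengthening case 2, because the strengthened statement is false. The fix for your proof is to drop the same-side requirement, and with it feasibility, from the path case.
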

\begin{proof}[Proof]
  Note that $N_{p,k}\in \mathbb{Z}$ and $\bra{\boldsymbol{z},\;\boldsymbol{v}}$ satisfies \eqref{prjlpaa}. If there exist $p\in \mathcal{P}$ and $k\in\mathcal{K}(p)$ for which $\bra{\boldsymbol{z},\;\boldsymbol{v}}$ satisfies \eqref{prjlpaa} as a strict inequality, the let $\varepsilon = v_{p,k} - \abs{ \sum\nolimits_{\ell \in \mathcal{L}_{p,k}}   z_\ell - N_{p,k}}$, $\bra{\underline{\boldsymbol{z}},\underline{\boldsymbol{v}}}=\bra{\overline{\boldsymbol{z}},\overline{\boldsymbol{v}}}=\bra{\boldsymbol{z},\;\boldsymbol{v}}$ and change $\underline{v}_{p,k}$ to ${v}_{p,k}-\varepsilon$ and $\overline{v}_{p,k}$ to ${v}_{p,k}+\varepsilon$ to obtain case \ref{secondcaselemma}.

  If all inequalities \eqref{prjlpaa} are satisfied at equality by $\bra{\boldsymbol{z},\;\boldsymbol{v}}$, then for each $p\in \mathcal{P}$ and $k\in\mathcal{K}(p)$ we have that either $v_{p,k}\notin \mathbb{Z}_+$ or $\abs{\set{\ell \in \mathcal{L}_{p,k}\,:\, z_\ell\in (0,1)}}\geq 2$.
  Hence, we may construct a sequences $\set{s_1,\ldots,s_S}\subseteq \set{1,\ldots,L}$  and $\set{r_1,\ldots,r_{S+1}}\subseteq \bigcup_{p\in \mathcal{P}}  \mathcal{K}(p)$ such that (without loss of generality by possibly interchanging $p_1$ and $p_2$):
  \begin{itemize}
  	\item $s_j\neq s_l$ and $r_j\neq r_S$ for all $j,l\in \set{1,\ldots,S}$ with $j\neq l$,

  \item $z_{\ell_{s_j}}\in (0,1)$ for all $j\in \set{1,\ldots,S}$,
  	\item $r_j\in \mathcal{K}\bra{p_{h(j)}}$ for all $j\in \set{1,\ldots,S+1}$
  	where
  \[h(j)=2-j+2\left\lfloor j/2\right\rfloor=\begin{cases}1&\text{$j$ is odd}\\ 2& \text{$j$ is even},\end{cases} \]
  \item $\ell_{s_j}\in \mathcal{L}_{p_{h(j)},r_j}$ and $\ell_{s_j}\in \mathcal{L}_{p_{h(j+1)},r_{j+1} }$ for all $j\in \set{1,\ldots,S}$,
  \end{itemize}
  and either
  \begin{equation}\label{firstconditionLP}
  \text{ $S$ is even, $h(1)=h(S+1)$ and $r_1=r_{S+1}$ }
  \end{equation}
  or
  \begin{subequations}\label{secondconditionLP}
  \begin{align}
  v_{{p_{h(1)},r_1}},v_{{p_{h(S+1)},r_{S+1} }}&\in (0,1),\\
  \set{\ell \in \mathcal{L}_{{p_{h(1)},r_1}}\,:\, z_\ell\in (0,1)}&=\set{z_{\ell_{s_1}}},\\
  \set{\ell \in \mathcal{L}_{{p_{h(S+1)},r_{S+1} }}\,:\, z_\ell\in (0,1)}&=\set{z_{\ell_{s_S}}}.
  \end{align}
  \end{subequations}

  If \eqref{firstconditionLP} holds let  $\varepsilon =\min_{i=1}^S\set{z_{\ell_{s_j}},1-z_{\ell_{s_j}}}$, $\underline{\boldsymbol{z}}$ and $\overline{\boldsymbol{z}}$ be such that
  \begin{align*}
  \underline{z}_\ell &=
  \begin{cases}
   z_\ell -\varepsilon & \text{if $\ell=s_j$ for an odd $j\in S$,}\\
   z_\ell +\varepsilon & \text{if $\ell=s_j$ for an even $j\in S$,}\\
   z_\ell &\text{otherwise}
  \end{cases}\\
  \overline{z}_\ell &=
  \begin{cases}
   z_\ell +\varepsilon & \text{if $\ell=s_j$ for an odd $j\in S$,}\\
   z_\ell -\varepsilon & \text{if $\ell=s_j$ for an even $j\in S$,}\\
   z_\ell &\text{otherwise},
  \end{cases}
  \end{align*}
  to obtain case \ref{firstcaselemma}.

  If instead \eqref{secondconditionLP} holds let $q_1=p_{h(1)}$, $q_2=p_{h(S+1)}$, $u_1=r_1$, $u_2=r_{S+1}$, \[\varepsilon =\min\set{\min_{i=1}^S\set{z_{\ell_{s_j}},1-z_{\ell_{s_j}}},v_{{p_{h(1)},r_1}},v_{{p_{h(S+1)},r_{S+1} }}}\]  $\underline{\boldsymbol{z}}$ and $\overline{\boldsymbol{z}}$ be as defined previously with this new $\varepsilon$. Then, let $\underline{\boldsymbol{v}}$ and $\overline{\boldsymbol{v}}$ be such that
\begin{align*}
\underline{v}_{p,k} &=
\begin{cases}
 {v}_{p,k} +\varepsilon & \text{if $p=p_{h(1)}$, $k={r_1}$ and $N_{p,k}-\sum_{\ell \in \mathcal{L}_{p,k}}   z_\ell>0$}\\
  {v}_{p,k} -\varepsilon & \text{if $p=p_{h(1)}$, $k={r_1}$ and $N_{p,k}-\sum_{\ell \in \mathcal{L}_{p,k}}   z_\ell< 0$}\\
   {v}_{p,k} -\varepsilon & \text{if $p=p_{h(S+1)}$, $k={r_{S+1}}$, $S$ is even and $N_{p,k}-\sum_{\ell \in \mathcal{L}_{p,k}}   z_\ell>0$}\\
  {v}_{p,k} +\varepsilon & \text{if $p=p_{h(S+1)}$, $k={r_{S+1}}$, $S$ is even and $N_{p,k}-\sum_{\ell \in \mathcal{L}_{p,k}}   z_\ell <0$}\\
  {v}_{p,k} +\varepsilon & \text{if $p=p_{h(S+1)}$, $k={r_{S+1}}$, $S$ is odd and $N_{p,k}-\sum_{\ell \in \mathcal{L}_{p,k}}   z_\ell>0$}\\
{v}_{p,k} -\varepsilon & \text{if $p=p_{h(S+1)}$, $k={r_{S+1}}$, $S$ is odd and $N_{p,k}-\sum_{\ell \in \mathcal{L}_{p,k}}   z_\ell< 0$}\\
 {v}_{p,k} & \text{otherwise}
\end{cases}\\
\overline{v}_{p,k} &=
\begin{cases}
 {v}_{p,k} -\varepsilon & \text{if $p=p_{h(1)}$, $k={r_1}$ and $N_{p,k}-\sum_{\ell \in \mathcal{L}_{p,k}}   z_\ell>0$}\\
  {v}_{p,k} +\varepsilon & \text{if $p=p_{h(1)}$, $k={r_1}$ and $N_{p,k}-\sum_{\ell \in \mathcal{L}_{p,k}}   z_\ell<0$}\\
   {v}_{p,k} +\varepsilon & \text{if $p=p_{h(S+1)}$, $k={r_{S+1}}$, $S$ is even and $N_{p,k}-\sum_{\ell \in \mathcal{L}_{p,k}}   z_\ell>0$}\\
  {v}_{p,k} -\varepsilon & \text{if $p=p_{h(S+1)}$, $k={r_{S+1}}$, $S$ is even and $N_{p,k}-\sum_{\ell \in \mathcal{L}_{p,k}}   z_\ell< 0$}\\
  {v}_{p,k} -\varepsilon & \text{if $p=p_{h(S+1)}$, $k={r_{S+1}}$, $S$ is odd and $N_{p,k}-\sum_{\ell \in \mathcal{L}_{p,k}}   z_\ell>0$}\\
{v}_{p,k} +\varepsilon & \text{if $p=p_{h(S+1)}$, $k={r_{S+1}}$, $S$ is odd and $N_{p,k}-\sum_{\ell \in \mathcal{L}_{p,k}}   z_\ell< 0$}\\
 {v}_{p,k} & \text{otherwise}
\end{cases}
\end{align*}
  to obtain case \ref{secondcaselemma}.
\end{proof}

\begin{proof}[Proof of Proposition ~\ref{allcovariatesprop}]

For case \ref{allcovariatespropcase1} we prove the result for $P=2$. The result follows for $P=1$ by copying the covariate as the result for $P=2$ does not require that the covariates are different. Furthermore, because of Proposition~\ref{projectionprop} it suffices to prove the result for \eqref{projected}.

Assume for a contradiction that the LP relaxation of \eqref{projected} has a non-integral extreme point $\bra{\boldsymbol{z},\;\boldsymbol{v}}$. Then by Lemma~\ref{twocovlemma} there exist $\bra{\underline{\boldsymbol{z}},\underline{\boldsymbol{v}}}$ and $\bra{\overline{\boldsymbol{z}},\overline{\boldsymbol{v}}}$ that are feasible for the LP relaxation of \eqref{projected},  $\bra{\underline{\boldsymbol{z}},\underline{\boldsymbol{v}}}\neq\bra{\overline{\boldsymbol{z}},\overline{\boldsymbol{v}}}$ and $\bra{\boldsymbol{z},\;\boldsymbol{v}}=\frac{1}{2}\bra{\underline{\boldsymbol{z}},\underline{\boldsymbol{v}}}+\frac{1}{2}\bra{\overline{\boldsymbol{z}},\overline{\boldsymbol{v}}}$, which contradicts $\bra{\boldsymbol{z},\boldsymbol{v}}$ being an extreme point.

For case \ref{allcovariatespropcase2} we only need to prove the result for $P\geq 3$. Again, assume for a contradiction that the LP relaxation of \eqref{projected} has a non-integral extreme point $\bra{\boldsymbol{z},\;\boldsymbol{v}}$. If we restrict this $\boldsymbol{v}$ to the last two covariates to we can apply Lemma~\ref{twocovlemma} to obtain $\bra{\underline{\boldsymbol{z}},\underline{\boldsymbol{v}}}$ and $\bra{\overline{\boldsymbol{z}},\overline{\boldsymbol{v}}}$ (with $\underline{\boldsymbol{v}}/\overline{\boldsymbol{v}}$ restricted to the last two covariates) such that    $\bra{\underline{\boldsymbol{z}},\underline{\boldsymbol{v}}}\neq\bra{\overline{\boldsymbol{z}},\overline{\boldsymbol{v}}}$ and $\bra{\boldsymbol{z},\;\boldsymbol{v}}=\frac{1}{2}\bra{\underline{\boldsymbol{z}},\underline{\boldsymbol{v}}}+\frac{1}{2}\bra{\overline{\boldsymbol{z}},\overline{\boldsymbol{v}}}$ (with $\boldsymbol{v}/\underline{\boldsymbol{v}}/\overline{\boldsymbol{v}}$ restricted to the last two covariates). Furthermore, $\bra{\underline{\boldsymbol{z}},\underline{\boldsymbol{v}}}$ and $\bra{\overline{\boldsymbol{z}},\overline{\boldsymbol{v}}}$ satisfy all the constraints of LP relaxation of \eqref{projected} except for constraints  \eqref{prjlpaa} associated to the first $P-2$ covariates.

If $\bra{\underline{\boldsymbol{z}},\underline{\boldsymbol{v}}}$ and $\bra{\overline{\boldsymbol{z}},\overline{\boldsymbol{v}}}$  came from  case \ref{firstcaselemma} Lemma~\ref{twocovlemma} we  extend $\underline{\boldsymbol{v}}/\overline{\boldsymbol{v}}$  to all covariates by letting $\underline{{v}}_{p,k}=\overline{{v}}_{p,k}=v_{p,k}$ for all $p\in \set{p_1,\ldots,p_{P-2}}$ and $k\in \mathcal{K}\bra{p}$.
This extension is feasible for the complete  LP relaxation of \eqref{projected} because
$\sum\nolimits_{\ell \in \mathcal{L}_{p_{P-1},k}}   \underline{z}_\ell=\sum\nolimits_{\ell \in \mathcal{L}_{p_{P-1},k}}   \overline{z}_\ell=\sum\nolimits_{\ell \in \mathcal{L}_{p_{P-1},k}}   {z}_\ell  $
for all  $k\in\mathcal{K}(p_{P-1})$, and because by assumption for all $p\in \set{p_{P-1},p_{P}}$, $k\in\mathcal{K}(p)$ and
$i\in \set{1,\ldots,P-2}$ there exist $k'\in  \mathcal{K}\bra{p_i}$ such that
 $\mathcal{L}_{p,k}\subseteq \mathcal{L}_{p_i,k'}$.

 If $\bra{\underline{\boldsymbol{z}},\underline{\boldsymbol{v}}}$ and $\bra{\overline{\boldsymbol{z}},\overline{\boldsymbol{v}}}$  came from  case \ref{secondcaselemma} Lemma~\ref{twocovlemma}, let $q_1,q_2\in \set{p_{P-1},p_{P}}$, $u_1\in\mathcal{K}(q_1)$, $u_2\in\mathcal{K}(q_2)$ from the Lemma be such that $\abs{ \sum\nolimits_{\ell \in \mathcal{L}_{p,k}}   \underline{z}_\ell - N_{p,k}}   = \underline{v}_{p,k} $ and $\abs{ \sum\nolimits_{\ell \in \mathcal{L}_{p,k}}   \overline{z}_\ell - N_{p,k}}   = \overline{v}_{p,k} $ for all $\bra{p,k}\in \set{\bra{q_1,u_1}, \bra{q_2,u_2}}$. Then if we extend $\underline{\boldsymbol{v}}/\overline{\boldsymbol{v}}$  be letting
 \begin{align*}
\underline{v}_{p,k} = \overline{v}_{p,k}=
\begin{cases}
 {v}_{q_1,u_1}  & \text{if $k\in  \mathcal{K}\bra{p}$ is such that $\mathcal{L}_{q_1,u_1}\subseteq \mathcal{L}_{p,k}$}\\
 {v}_{q_2,u_2}  & \text{if $k\in  \mathcal{K}\bra{p}$ is such that $\mathcal{L}_{q_2,u_1}\subseteq \mathcal{L}_{p,k}$}\\
 {v}_{p,k} & \text{otherwise}
\end{cases}
\end{align*}
we again obtain an extension that is feasible for the complete  LP relaxation of \eqref{projected}.

In both cases we end up with a contradiction of $\bra{\boldsymbol{z},\boldsymbol{v}}$ being an extreme point.
\end{proof}

\begin{proof}[Proof of Lemma ~\ref{examplelemma}]
Let
  \begin{alignat*}{3}
    \mathbf{x}_1&=\bra{k_1,k_1,k_1},&\quad & \mathbf{x}_2&=\bra{k_3,k_3,k_3}, \\
    \mathbf{x}_1&=\bra{k_1,k_2,k_3},&\quad & \mathbf{x}_2&=\bra{k_3,k_2,k_1},\\
    \mathbf{x}_1&=\bra{k_2,k_1,k_2},&\quad & \mathbf{x}_2&=\bra{k_2,k_3,k_2}, \\
  \end{alignat*}
$T=3$, $L=6$ and $N_{p,k}=1$ for all $p$ and $k$. The feasible region of the LP relaxation of \eqref{projected} for this case is given by
\begin{alignat*}{7}
    \abs{ z_{\ell_1} +z_{\ell_3} -1 } &\leq v_{p_1,k_1}, &\quad   \abs{ z_{\ell_5} +z_{\ell_6} -1 } &\leq v_{p_1,k_2}, &\quad &   \abs{ z_{\ell_2} +z_{\ell_4} -1 } &\leq v_{p_1,k_3},\\
    \abs{ z_{\ell_1} +z_{\ell_5} -1 } &\leq v_{p_2,k_1}, &\quad   \abs{ z_{\ell_3} +z_{\ell_4} -1 } &\leq v_{p_2,k_2}, &\quad &   \abs{ z_{\ell_2} +z_{\ell_6} -1 } &\leq v_{p_2,k_3}\\
    \abs{ z_{\ell_1} +z_{\ell_4} -1 } &\leq v_{p_3,k_1}, &\quad    \abs{ z_{\ell_5} +z_{\ell_6} -1 } &\leq v_{p_3,k_2}, &\quad &    \abs{ z_{\ell_2} +z_{\ell_3} -1 } &\leq v_{p_3,k_3}\\
    \sum_{i=1}^6 z_{\ell_i}           &=1,               &\quad                      0\leq z_{\ell_i}&\leq 1             &\quad&\forall i\in \set{1,\ldots,6}.&
 \end{alignat*}
Using CDDLib \citep{fukuda2001cddlib} we can check that this LP has $11$ fractional extreme points out of a total of $31$. In particular, $z_{\ell_i}=1/2$ for all $i\in\set{1,\ldots,6}$ and $v_{p,k}=0$ for all $p$ and $k$ is one such fractional extreme point.
\end{proof}

\subsection*{Appendix D: Covariate balance}

\begin{figure}[H]
\centering
\caption{Standardized differences in means in covariates before and after matching for 10 levels of exposure.}
\makebox{\includegraphics[scale=.85]{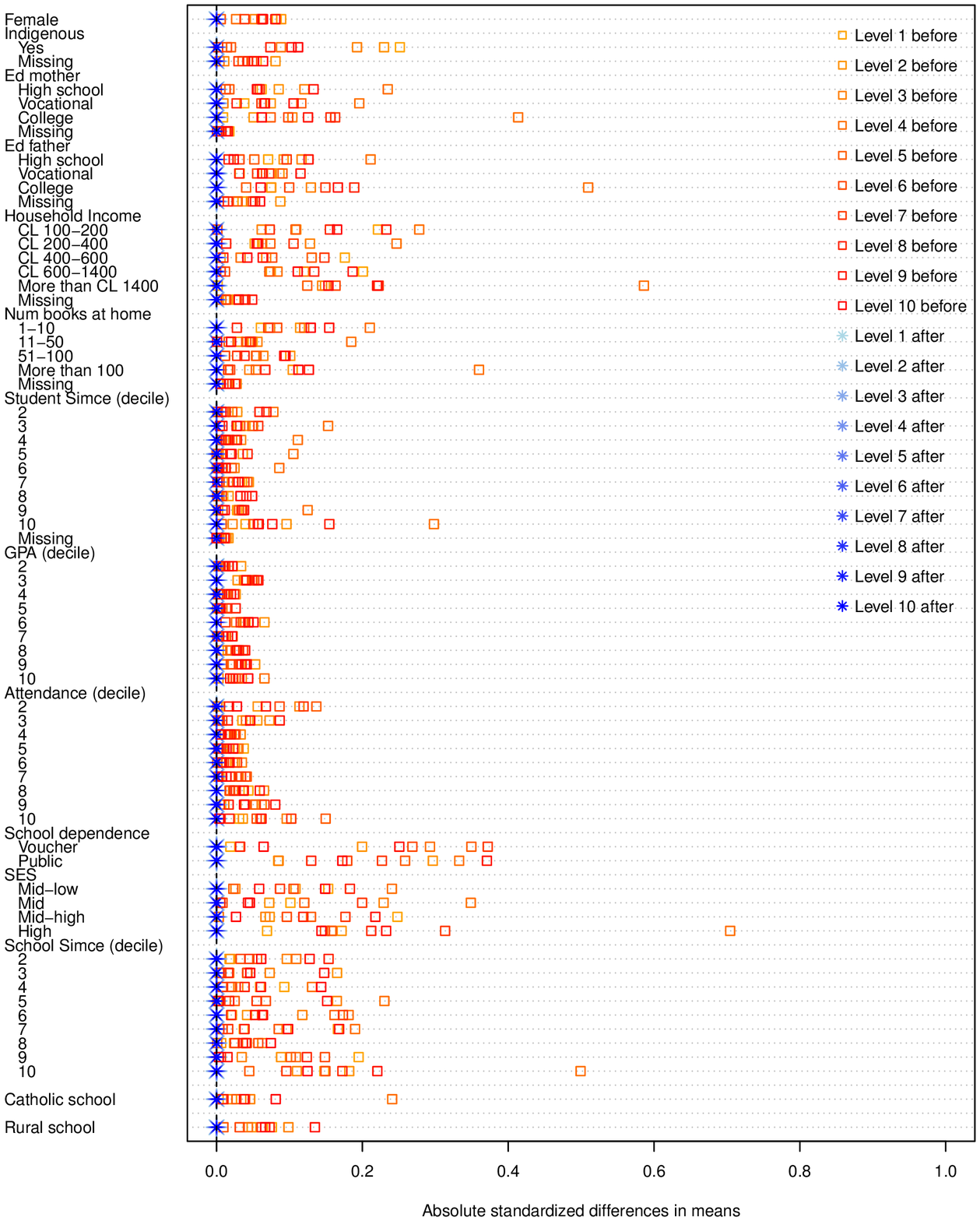}}
\label{fig:balance}
\end{figure}

\subsection*{Appendix E: Effect estimates}

\begin{table}[H]
\begin{center}
\caption{Effect estimates and 95\% confidence intervals for different levels of exposure to the earthquake.  The point estimates are contrasts with respect to exposure level 1.  The 95\% confidence account for multiple comparisons.\label{outcomes_cc}}
\subfloat[3 exposure levels]{
\begin{adjustbox}{scale=0.67}
\begin{tabular}{ccc}
  Exposure level & Attendance (\%) & PSU score\\
  \hline
2 &-1.00 & 11.00 \\ 
&  [-2.01,0.00] & [1.90,18.50] \\ 
3&  -11.85 & 8.00 \\ 
 & [-12.35,-11.00] & [1.50,14.10] \\ 
   \hline
\end{tabular}

\end{adjustbox}
}\\
\subfloat[5 exposure levels]{
\begin{adjustbox}{scale=0.67}
\begin{tabular}{ccc}
  Exposure level & Attendance (\%) & PSU score \\
  \hline
2&-0.00 & 6.00 \\ 
&  [-1.00,1.00] & [-0.60,14.00] \\ 
3&  -2.55 & 6.50 \\ 
 & [-4.00,-1.99] & [-0.10,13.50] \\ 
 4& -4.00 & 4.50 \\ 
  & [-5.01,-3.00] & [-3.00,11.00] \\ 
 5&  -12.95 & 8.50 \\ 
 & [-13.70,-12.09] & [0.90,15.10] \\ 
   \hline
\end{tabular}

\end{adjustbox}
}\\
\subfloat[10 exposure levels]{
\begin{adjustbox}{scale=0.67}
\begin{tabular}{ccc}
  Exposure level & Attendance (\%) & PSU score \\
  \hline
2 & 0.00 & 4.00 \\ 
&  [-0.01,1.01] & [-3.50,11.60] \\ 
3&  0.00 & 9.00 \\ 
&  [-1.00,1.01] & [0.90,17.10] \\ 
 4 & -1.00 & 10.00 \\ 
 & [-2.00,0.01] & [2.00,17.60] \\ 
 5 & -2.00 & 9.00 \\ 
  & [-2.16,-0.99] & [0.90,16.50] \\ 
 6 & -3.00 & 7.50 \\ 
 & [-4.01,-2.00] & [-0.50,14.50] \\ 
 7& -2.00 & -3.50 \\ 
 & [-3.01,-1.69] & [-3.60,11.00] \\ 
 8&  -6.10 & -4.00 \\ 
  & [-7.66,-5.00] & [-4.10,11.10] \\ 
 9& -11.80 & 9.00 \\ 
 &  [-12.85,-10.80] & [1.50,17.50] \\ 
 10&  -13.75 & 10.50 \\ 
  & [-14.56,-12.94] & [3.0,18.60] \\ 
  \hline
\end{tabular}

\end{adjustbox}
}
\end{center}
\end{table}


\end{document}